\documentclass[a4paper,twoside]{amsart}
\usepackage{amsmath}
\usepackage{amsfonts}
\usepackage{amssymb}
\usepackage{amsthm}
\usepackage{newlfont}
\usepackage{graphicx}
\usepackage{amscd}
\usepackage{braket}
\usepackage{arydshln}

\theoremstyle{plain}
\newtheorem{Th}{Theorem}[section]
\newtheorem{Cor}[Th]{Corollary}

\newtheorem{Prop}[Th]{Proposition}
\theoremstyle{definition}

\newtheorem{Ex}{Example}[section]
\theoremstyle{remark}
\newtheorem*{Rem}{Remark}%[section]
\numberwithin{equation}{section}
\newcommand{\BB}{{\mathbb B}}
\newcommand{\HH}{{\mathbb H}}

\newcommand{\NN}{{\mathbb N}}

\newcommand{\CC}{{\mathbb C}}

\newcommand{\II}{{\mathbb I}}
\newcommand{\ZZ}{{\mathbb Z}}

\begin{document}

\title[On aggregation--quantization permutability problem]
{On aggregation--quantization permutability problem\\ for discrete-time Markov chains}

\author{Adam Doliwa}

\author{Artur Siemaszko}

\author{Adam Zalewski}

\address{Faculty of Mathematics and Computer Science\\
	University of Warmia and Mazury in Olsztyn\\
	ul.~S{\l}oneczna~54\\ 10-710~Olsztyn\\ Poland} 
\email{adam.doliwa@uwm.edu.pl, artur@uwm.edu.pl, adam.zalewski@uwm.edu.pl}
%\urladdr{http://wmii.uwm.edu.pl/~doliwa/}

%
\date{}
\keywords{quantum walks, Markov chains, Szegedy's quantization, graph theory, CMV matrices, Ehrenfests model, Cayley graphs}
\subjclass[2010]{60J10, 81P68, 05C81, 47B36}

\begin{abstract}
Given random walk on a graph, the corresponding discrete-time quantum walk can be constructed using the method proposed by Szegedy. On the other hand, given a partition of the set of states of a Markov chain, one can study the corresponding aggregated process. We extend the aggregation technique to the level of quantum Markov chains. We provide conditions under which application of these two operations -- Szegedy's quantization and aggregation -- give the same result.  These consist of the strong lumpability of the underlying Markov chain supplemented by nonlinear relations, generalizing Kolmogorov's cycle condition, that account for the coefficients of both the original and the aggregated transition matrices. In particular, we show that the conditions are satisfied in the case of the random walk on graphs equipped with equitable partitions. We present several examples, which include the classical/quantum walks on Platonic solids. We discuss also relation of discrete-time classical/quantum walks on $N$-dimensional hypercube and the Ehrenfests urn model with $N$ particles. We apply our technique for of discrete-time walks on Cayley graphs of free groups. We also compare our results with those obtained using Cantero--Moral--Vel\'{a}zquez  uniformization of unitary matrices.

\end{abstract}
\maketitle

\section{Introduction}

Random walks form one of the cornerstones in modern theoretical computer science~\cite{MitzenmacherUpfal}. 
However, even seemingly simple models can be challenging due to the large number of states, making analysis a difficult task. To tackle this problem, one can reduce the state space of the underlying Markov chain by aggregating states that exhibit equivalent behaviors, see for example \cite{GangulyPetrovKoeppl} for application of the technique in mathematical biology, or \cite{Wilson} for similar ideas in theoretical physics. 
In the theory of Markov chains this state-based reduction technique is known as lumping. Various notions of lumping, including strong and weak lumping~\cite{KemenySnell}, exact lumping~\cite{Schweitzer}, and strict lumping~\cite{Buchholz}, have been introduced in the literature.  Information-theoretic approach to the reduction of Markov chains with special emphasis on their application in data mining and knowledge discovery is the subject of~\cite{Geiger}.

In quantum computing~\cite{Hirvensalo,NielsenChuang} the  measurement-induced randomness is inherent and, together with entanglement and interference, provides the three pillars of quantum algorithms. The famous Grover's search algorithm~\cite{Grover}, based on the amplitude amplification, can be put within the quantum walk on a finite graph, i.e. the quantum analogue of a Markov chain~\cite{ShenviKempeWhaley,Szegedy,Portugal}. Recent works show the fundamental role of quantum walks in the domain of quantum algorithms~\cite{1sQEW,Childs,Kempe,Kendon,MNRS,QW-review}. 
Any quantum algorithm can be expressed as a quantum walk, and thus described as a moving quantum particle on a graph. This opens the possibility of using quantum walks as a generic framework to design quantum algorithms.

Quantum walks come in two variants: discrete- and continuous-time, and both have been shown to constitute a universal model of quantum computation~\cite{Childs,LCETK}. In this work we focus on the discrete-time model with coins, which consists of a bipartite quantum state, containing information about the position and coin states of a walker, and a unitary evolution operator, composed of the subsequent application of a coin operator, and a swap operator. Discrete-time quantum walks on graphs using a coin operator were initiated in~\cite{Aharonov-Vazirani}.  A general quantization procedure for discrete-time finite Markov chains has been proposed by Szegedy~\cite{Szegedy}. His approach uses Grover's-type coin operator but discrete Fourier transform-type coin, especially in the case of regular graphs, is used often as well~\cite{Portugal}. Complex-phase extensions of Szegedy quantum walk on graphs are studied in \cite{Ortega-Delgado}. Application of Szegedy's construction to random walks on half-line showed~\cite{Doliwa-Siemaszko-QW} its connection with orthogonal polynomials on the unit circle \cite{Simon-OPUC} generalizing the approach initiated in~\cite{CMGV}.
Quantum walks on certain Cayley graphs, including graphs generated by free groups were investigated in~\cite{AcevedoGobron}. In \cite{MoorRussell} the Grover walk on the hypercube was analyzed using the Fourier method. Discrete quantum walks on the symmetric group was studied in~\cite{Banerjee}.  We also mention, that open quantum walks taking into account the interaction with environment are also being investigated~\cite{APSS}.

Given applicability of coarse-graining techniques for Markov chains, in the present work we study the corresponding operation on the level of Szegedy's quantizations of the underlying processes. In particular we define quantum version of the aggregation procedure.  Given a graph $G(V,E)$ and given a partition $\tilde{V}$ of its vertex set $V$ we build in a natural way the coarsened graph $\tilde{G}(\tilde{V},\tilde{E})$, see \eqref{eq:tilde-E} for details.
Given random walks on $G(V,E)$ and $\tilde{G}(\tilde{V},\tilde{E})$ with transition matrices $P$ and $\tilde{P}$, respectively, Szegedy's construction provides the corresponding evolution operators $U$ and $\tilde{U}$. On the other hand, the partition $\tilde{V}$ allows to restrict Hilbert space of the quantum walk on $G$ to the subspace spanned by the aggregated states, as defined by formula \eqref{eq:aggregated-uv}. We impose the following aggregation--quantization condition: \emph{the evolution operator $U$ of the quantized walk on $G(V,E)$ should act on the aggregated states in the same way as the operator $\tilde{U}$  act on the corresponding states of the quantum walk on the aggregated graph $\tilde{G}(\tilde{V},\tilde{E})$.
 } 

The main subject of our work is to study conditions which allow for the aggregation--quantization permutability.  We find that the strong lumpability of the random walk on $G(V,E)$ with respect to the partition $\tilde{V}$ is necessary for the permutability. In addition, the transition probabilities of the original walk and its aggregation should satisfy certain nonlinear relations. They originate as compatibility conditions of a system of linear equations which allow to calculate coefficients in definition of the aggregated states. Interestingly, the linear equations are of the form of the detailed balance condition, known from the theory of reversible Markov chains~\cite{Kelly}, and the nonlinear equations generalize the corresponding Kolmogorov's cycle condition.    

We provide also several examples how our theory works. In doing that we concentrate mostly on graphs with symmetries, where quantum evolution naturally confines the dynamics to a smaller subspace of the full Hilbert space~\cite{Ekert2} and  provides aggregation of the underlying process. 
In particular, we show that our conditions are satisfied in the case of the random walk on graphs equipped with equitable partitions. Our examples start with the classical/quantum walks on Platonic solids. We discuss also relation of discrete-time classical/quantum walks on $N$-dimensional hypercube and the Ehrenfests urn model with $N$ particles. We apply our technique for of discrete-time walks on Cayley graphs of free groups. 

We do not know any previous work where the aggregation--quantization problem for Markov chains was formulated and systematically investigated. However,  application of the coarse-graining techniques to study various problems related to quantum walks, both continuous- or discrete-time, is not new. We would like to point out \cite{KroviBrun}
where discrete-time quantum walks on quotient graphs resulting from symmetric graphs with suitably chosen initial state were investigated. The quotient graph is obtained from the original graph by identifying vertices and edges which form an orbit under the action of a subgroup of the automorphism group of the graph; we remark that in our approach we do not assume symmetry of the graph what allows for non-uniform aggregation coefficients. In~\cite{Godsil-transfer} more general aggregation procedure, based on equitable partitions, in the context of perfect state transfer via continuous-time quantum walk was proposed. Quantum walk based search algorithms in relation to partition of graphs, especially equitable partitions, were considered in~\cite{Ide}.

We compare our approach with that coming from the uniformization of unitary matrices to the Cantero--Moral--Vel\'{a}zquez (CMV) form~\cite{CMV-1,KillipNenciu-CMV,Simon-CMV}, when a suitable basis adjusted to both the quantum evolution and the initial state respecting the symmetry can be constructed. It is known that every cyclic unitary operator is unitarily equivalent to a unique CMV matrix. This makes CMV matrices the proper canonical form for unitary operators, just as Jacobi matrices are for self-adjoint operators. This analogy has become particularly important in the spectral analysis of Szegedy and coined quantum walks, where CMV methods often play the same role that Jacobi matrices play for continuous-time walks~\cite{CMGV,Doliwa-Siemaszko-QW}. The CMV basis construction applied to a discrete-time quantum walk and an initial state vector, naturally selects an appropriate subspace of the Hilbert space in which the evolution takes place, and has the form of a walk on the segment. For this reason the CMV uniformization and coarse-graining aggregation, although conceptually different, are two ways of simplifying the analysis of a quantum walk on a graph.

The structure of our article is as follows. In Section~\ref{sec:preliminaries}
we first recall basic information on aggregation of Markov chains, their Szegedy's quantization, and the Cantero--Moral--Vel\'{a}zquez  theory. We illustrate these basic concepts on the example of the hexahedron graph. Then in Section~\ref{sec:qa} we formulate the aggregation operation for quantum Markov chains, and we study its permutability with Szegedy's quantization procedure. In particular, we present  conditions on transition probability matrix of a given Markov chain in relation to its aggregation, which makes the two operations commute. As basic examples of quantum aggregation we present walks on graphs of other Platonic solids. In next two Sections we present more advanced examples of quantum aggregation of homogeneous random walk an the $N$-dimensional hypercube in relation to the Ehrenfests urn model, and the reduction of Szegedy's quantization of the random walk on free groups. The last Section concludes our presentation and points out towards various generalizations and new research directions.

\section{Preliminaries}
\label{sec:preliminaries}
Consider a directed graph $G = (V,E)$ with vertex set $V$ and edge set $E\subset V\times V$. Given a partition $\tilde{V}$ of its vertex set (into disjoint nonempty subsets, whose union is $V$) define the coarsened graph $\tilde{G} = (\tilde{V}, \tilde{E})$ with the edge set $ \tilde{E} \subset \tilde{V}\times \tilde{V}$ defined by 
\begin{equation} \label{eq:tilde-E}
	(u,v)\in \tilde{E} \quad \Leftrightarrow \quad \exists \; i\in u, \, j\in v \quad \text{such that} \quad  (i,j)\in E.
\end{equation}
In machine learning, graph coarsening goes under various names, e.g., graph
down-sampling or graph reduction. Its goal in most cases is to replace some original graph by one which has fewer nodes, but whose structure and characteristics are similar to those of the original graph.

\subsection{Lumpability of Markov chains}
Let $X = (X_n)_{n\geq 0}$ be a discrete-time Markov chain~\cite{Norris} taking values in a finite set  $V$, characterized by its transition matrix $P = (P_{ij})_{i,j\in V}$ and initial distribution $\pi$. Entry $P_{ij}$ represents the probability of making a transition from $i$ to $j$, in particular $\sum_{j=1}^N P_{ij} = 1$. Such a matrix gives rise to a directed weighted graph $G(V,E)$, where the ordered pair $(i,j)$ of vertices belongs to the edge set $E$ if and only if $P_{ij}>0$. 

Let $\tilde{V}$ be a fixed partition of the set $V$ of micro-states. We use it to construct an aggregated process $\tilde{X}$ by substituting each subset $u\subset V$ of states by a single macro-state
\begin{equation}
	\tilde{X}_n = u  \Leftrightarrow X_n \in u, \qquad \text{for any} \quad n .
\end{equation}
Correspondingly, for any probability distribution $\pi$ on $V$, define the distribution $\tilde{\pi} = (\tilde{\pi}_u)_{u\in \tilde{V}}$ on $\tilde{V}$ by
\begin{equation}
\tilde{\pi}_u = \sum_{i\in u} \pi_i .
\end{equation}

When the aggregated process is Markov  for any initial distribution then $X$ is said to be \emph{lumpable}~\cite{KemenySnell} (with respect to a given partition). 
It is known  that the lumpability can be characterized
on the level of the transition matrix $P$ such that sums of transition probabilities from each state in a partition group to all states of another partition have to be equal (row sum criterion), and then the quantity is equal the to the corresponding element of the transition matrix of the aggregated chain
\begin{equation} \label{eq:lumpability}
	\sum_{j\in v} P_{ij} = \tilde{P}_{uv}, \qquad i\in u, \qquad u,v \in \tilde{V}.
\end{equation}
It was noted in \cite{BarrThomas} that the spectrum of $\tilde{P}$ is then a subset of the spectrum of $P$. 
\begin{Ex} \label{ex:hexahedron-E}
	Consider discrete-time random walk on the hexahedron graph $H$, visualized on Figure~\ref{fig:hexahedron},
	\begin{figure}[h!]
		\begin{center}
			\includegraphics[width=5cm]{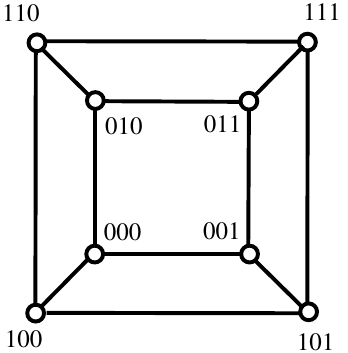}
		\end{center}
		\caption{The hexahedron graph}
		\label{fig:hexahedron}
	\end{figure}
	with vertices indexed by binary sequences of length $3$, an edge connects vertices whose sequences differ exactly at one place. 
	We group vertices with the same distance from the vertex $000$, what allows to consider aggregated process on the set $\{ A, B, C, D\}$, where 
\begin{equation}
	A = \{ 000 \}, \quad 
	B = \{ 001, 010, 100 \}, \quad C = \{ 011, 101, 110 \}, \quad D = \{ 111 \}. 
\end{equation} 	
The transition 	matrix of the original homogeneous walk, where we assume the same transition probability for every edge in both directions, written below in the way respecting the partition and the  above order of vertices reads
	\begin{equation}
	P_{H}=\frac{1}{3} \left(\begin{array}{c:ccc:ccc:c} 
		0 & 1 & 1 & 1 & 0 & 0 & 0 & 0 \\ \hdashline
		1 & 0 & 0 & 0 & 1 & 1 & 0 & 0 \\ 
		1 & 0 & 0  & 0 & 1 & 0 & 1 & 0 \\ 
		1 & 0 & 0 & 0 & 0 & 1 & 1 & 0\\ \hdashline 
		0 & 1 & 1 & 0 & 0 & 0 & 0 & 1 \\
		0 & 1 & 0 & 1 & 0 & 0 & 0 & 1 \\
		0 & 0 & 1 & 1 & 0 & 0 & 0 & 1 \\ \hdashline
		0 & 0 & 0 & 0 & 1 & 1 & 1 & 0 \\
	\end{array}\right).
\end{equation}
It respects the row sum criterion and gives rise to the lumped Markov chain, see Figure~\ref{fig:Ehrenfests-4}, with the transition matrix
	\begin{equation} \label{eq:red-hexahedron-P}
		\tilde{P}_H =  \frac{1}{3} \begin{pmatrix}
			0 & 3 & 0 & 0 \\
			1 & 0 & 2 & 0 \\
			0 & 2 & 0 & 1 \\
			0 & 0 & 3 & 0
		\end{pmatrix}.
	\end{equation}
	\begin{figure}[h!]
		\begin{center}
			\includegraphics[width=6cm]{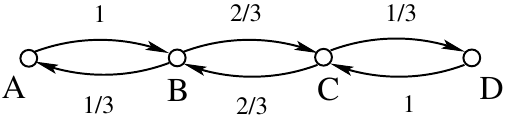}
		\end{center}
		\caption{Lumped random walk on the hexahedron graph}
		\label{fig:Ehrenfests-4}
	\end{figure}
\end{Ex}

\subsection{Szegedy's quantization of Markov chains}
Szegedy's quantization~\cite{Szegedy,Childs-CMP} of a random walk on graph with $|V|=N$ vertices starts with tensor doubling $\CC^N \otimes \CC^N$ --- the first factor is called the position space, and the second factor is called the coin space. The state $|i \rangle \otimes |j \rangle $ will be interpreted as "particle in position $i$ looks at the position $j$". The stochastic matrix $P$ allows to define normalized orthogonal vectors
\begin{equation} \label{eq:phi-i}
	| \phi_i \rangle =  \ket{i}\otimes \sum_{j=1}^N \sqrt{P_{ij}}\,  \ket{j} , \qquad i = 1,\dots , N.
\end{equation} 
By $\Pi$ denote the orthogonal projection on the subspace generated by the vectors  $|\phi_i\rangle$, i.e. 
\begin{equation} \label{eq:Pi}
	\Pi = \sum_{i=1}^N \ket{\phi_i } \bra{ \phi_i}, \qquad \text{i.e.} \quad \Pi (\ket{i}\otimes \ket{j}) = \ket{\phi_{i}}\sqrt{P_{ij}},
\end{equation}
then the reflection
\begin{equation} \label{eq:R}
	R = 2\Pi - \II,
\end{equation}
which acts on the coin space only, will be called the Grover~\cite{Grover} coin flip operator. 
Let $S$ be the operator that swaps the two registers 
\begin{equation} \label{eq:S}
	S( \ket{i}  \otimes \ket{j}  ) =     \ket{j}  \otimes \ket{i},
\end{equation}
then the single step of the quantum walk is defined as the unitary operator being the composition of coin flip and the position swap
\begin{equation} \label{eq:U}
	U = S \, R.
\end{equation}
\begin{Ex} \label{ex:hexahedron-Q}
	The quantization of the random walk on the hexahedron graph starts with definition of eight vectors $\ket{\phi_i}$, $i \in \ZZ_2^3$, of the space $\CC^8 \otimes \CC^8$
	\begin{gather*} \label{eq:phi-hex}
	\ket{\phi_{000}} = \frac{1}{\sqrt{3}} \ket{000}\otimes \left( \ket{001} + \ket{010} + \ket{100} \right), \quad  \dots \;  , \quad \ket{\phi_{111}} \frac{1}{\sqrt{3}} \ket{111}\otimes \left( \ket{110} + \ket{011} + \ket{101} \right).  
	\end{gather*}
An example of nontrivial action of the evolution operator looks as follows
\begin{equation*}
	\ket{001}\otimes \ket{101} \stackrel{R}{\mapsto} \frac{1}{3} 	\ket{001}\otimes \left(  2\ket{000} +2 \ket{011} -  \ket{101} \right) \stackrel{S}{\mapsto} \frac{1}{3} 	 \left(  2\ket{000} +2 \ket{011} -  \ket{101} \right) \otimes \ket{001} .
\end{equation*}
\end{Ex}

Let us  present quantization of the lumped random walk on the hexahedron described in Example~\ref{ex:hexahedron-E}. 
\begin{Ex} \label{ex:hexahedron-r-Q} 
	Consider the quantization of the reduced Markov chain with the transition matrix \eqref{eq:red-hexahedron-P} studied in Example~\ref{ex:hexahedron-E}. The vectors $\ket{\phi_i}$, $i=A, B, C, D$ are given by
	\begin{gather*}
		\ket{\phi_A} = \ket{A}\otimes \ket{B}, \quad  \ket{\phi_B} = \frac{1}{\sqrt{3}}  \ket{B}\otimes \left( \ket{A} + \sqrt{2} \ket{C} \right),\\
		   \ket{\phi_C} = \frac{1}{\sqrt{3}}  \ket{C}\otimes \left( \sqrt{2} \ket{B} +  \ket{D} \right),  \quad  \ket{\phi_D} = \ket{D}\otimes \ket{C} ,
	\end{gather*}
and the action of evolution operator $U$ on the natural basis vectors reads as follows
\begin{align*}
	U (\ket{A}\otimes \ket{B}) = \ket{B}\otimes \ket{A}, \qquad &	
	U (\ket{B}\otimes \ket{A} ) = 
	-\frac{1}{3} \ket{A}\otimes \ket{B} +  \frac{2\sqrt{2}}{3} \ket{C}\otimes \ket{B} , \\ 	U (\ket{B}\otimes  \ket{C} ) = 
	\frac{2\sqrt{2}}{3}  \ket{A}\otimes \ket{B} + \frac{1}{3} \ket{C}\otimes \ket{B} , \qquad &
		U (\ket{C} \otimes \ket{B}) = \frac{1}{3} \ket{B}\otimes \ket{C} + 	\frac{2\sqrt{2}}{3} \ket{D}\otimes \ket{C}, \\
			U (\ket{C} \otimes \ket{D}) =  \frac{2\sqrt{2}}{3} \ket{B}\otimes \ket{C}   - \frac{1}{3}\ket{D}\otimes \ket{C}, \qquad &
				U (\ket{D} \otimes \ket{C}) = \ket{C} \otimes \ket{D}.
\end{align*}	 
\end{Ex}

\subsection{Cantero--Moral--Vel\'{a}zquez (CMV) and Jacobi matrices}
Consider a unitary operator $U$ acting in a Hilbert space $\HH$. Given unit cyclic vector $e_0\in \HH$, i.e. finite linear combinations of $\{ U^n e_0 \}_{n\in\ZZ}$ are dense in $\HH$, define~\cite{CMV-1,CMV-2} the CMV basis $\{e_n\}_{n=0}^\infty$  by orthonormalizing the sequence $e_0, U e_0,U^{-1}e_0, U^2 e_0, U^{-2} e_0, \dots $. 
The matrix $\mathcal{C}$ with entries given by
\begin{equation*}
	\mathcal{C}_{mn} = \langle e_m , U e_n \rangle.
\end{equation*}
is unitary and pentadiagonal 
\begin{equation*}
	\mathcal{C} = \left( \begin{array}{ccccccc}
		\bar{\alpha}_0 & \bar{\alpha}_1 \rho_0 & \rho_1\rho_0 & 0    & 0& 0 & \cdots\\
		\rho_0  & -\bar{\alpha}_1 \alpha_0 & -\rho_1 \alpha_0 & 0& 0& 0 & \cdots \\
		0 & \bar{\alpha}_2 \rho_1 & -\bar{\alpha}_2\alpha_1 & \bar{\alpha}_3 \rho_2 & \rho_3 \rho_2 &0& \cdots \\
		0 & \rho_2 \rho_1 & -\rho_2\alpha_1 & -\bar{\alpha}_3 \alpha_2 & - \rho_3 \alpha_2 & 0&\cdots \\
		0 & 0 & 0 & \bar{\alpha}_4 \rho_3 & -\bar{\alpha}_4\alpha_3 & \bar{\alpha}_5 \rho_4 & \cdots \\
		0 & 0 & 0 & \rho_4 \rho_3 & -\rho_4\alpha_3 & -\bar{\alpha}_5 \alpha_4 & \cdots  \\
		\cdots & \cdots & \cdots & \cdots & \cdots & \cdots & \cdots
	\end{array} \right),
\end{equation*}
where the Verblunsky coefficients $\alpha_0, \alpha_1, \alpha_2,\dots$ satisfy $|\alpha_j| < 1$, and 
\begin{equation*}
\rho_n = \sqrt{1 - |\alpha_n|^2}.
\end{equation*}

The CMV matrix $\mathcal{C}$ has decomposition $\mathcal{C} = \mathcal{L}\mathcal{M}$, with
\begin{equation*}
	\mathcal{L} = \Theta_0 \oplus \Theta_2 \oplus \Theta_4 \oplus \dots \, , \quad
	\mathcal{M} = 1 \oplus \Theta_3 \oplus \Theta_ 5\oplus \dots \, ,  \quad
	\Theta_k = \left( \begin{array}{cc}
		\bar{\alpha}_k & \rho_k \\ \rho_k & - \alpha_k
	\end{array} \right).
\end{equation*}
It turns out that any cyclic unitary model is unitarily equivalent to a unique CMV model $(\ell^2(\NN),\mathcal{C},e_0)$ with $e_0 = (1,0,0, \dots)^T$.

\begin{Rem}
	Finite $N\times N$ CMV matrices are characterized by $|\alpha_k| < 1$, $k=0,\dots ,N-2$, and $|\alpha_{N-1}|=1$. Each unitary equivalence class of $N\times N$ unitary matrices with a cyclic vector contains a unique finite CMV matrix with cyclic vector $e_0 = (1,0,\dots,0)^T$. 
\end{Rem}
\begin{Rem}
	Such matrices appeared earlier in the numerical analysis literature~\cite{Watkins} as an analog for unitary matrices of the Lanczos algorithm~\cite{Lanczos,GolubLoan} of bringing symmetric matrices to the tridiagonal form.
\end{Rem}

For real Verblunsky coefficients one has $\mathcal{M}^2 = \mathcal{L}^2 = \II$. 
The restriction of the self-adjoint operator
\begin{equation}
\frac{1}{2} \left(\mathcal{C} + \mathcal{C}^t \right),
\end{equation}
to the eigenspace $\mathcal{M}$ corresponding to eigenvalue $1$, with $e_0$ as the cyclic vector~\cite{KillipNenciu-CMV,Simon-OPUC}, is of the Jacobi form
\begin{equation*} 
	\mathcal{J} = \left( \begin{array}{ccccc}
		r_0 & s_0 & 0 & 0 & \cdots \\
		s_0 & r_1 & s_1 & 0 & \cdots \\
		0 & s_1 & r_2 & s_2 & \cdots \\
		0 & 0 & s_2 & r_3 & \cdots \\
		\vdots & \vdots & \vdots & \vdots & \ddots
	\end{array} \right),
\end{equation*}
where the Jacobi matrix coefficients $r_k$ and $s_k$, $k=0,1,\dots$, are given by the Geronimus relations
\begin{align}
	s_k & =  \frac{1}{2}\sqrt{ (1-\alpha_{2k-1}) (1 - \alpha_{2k}^2) (1+\alpha_{2k+1})},\\
	\label{eq:rk}
	r_k & = \frac{1}{2} \left( \alpha_{2k}(1-\alpha_{2k-1}) - \alpha_{2k-2}(1+\alpha_{2k-1})\right).
\end{align}

In \cite{Doliwa-Siemaszko-QW} it was shown that application of Szegedy's quantization to generic discrete-time random walks on half-line $\NN_0$ with one-step transition probabilities given by the stochastic matrix, see~Figure~\ref{fig:random-walk},
\begin{equation*}
	P = \left( \begin{array}{ccccc}
		r_0 & p_0 & 0 & 0 & \cdots \\
		q_1 & r_1 & p_1 & 0 & \cdots \\
		0 & q_2 & r_2 & p_2 & \cdots \\
		0 & 0 & q_3 & r_3 & \cdots \\
		\vdots & \vdots & \vdots & \vdots & \ddots
	\end{array} \right),
\end{equation*}
\begin{figure}[!ht]
	\begin{center}
		\includegraphics[width=8cm]{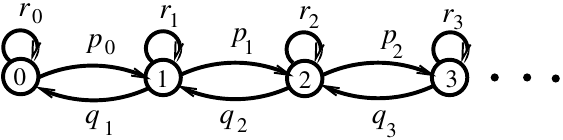}
	\end{center}
	\caption{discrete-time random walk on the half-line}
	\label{fig:random-walk}
\end{figure}
leads to the following result.
\begin{Prop} \label{prop:pqr-a}
	The CMV basis of the quantum evolution operator for Szegedy's quantization of the random walk on the half-line with the cyclic vector $e_0 = | \phi_0 \rangle$ has real Verblunsky coefficients related to the random walk transition probabilities by the formulas \eqref{eq:rk} and
	\begin{align}  \label{eq:qk}
		q_k & = \frac{1}{2} ( 1 + \alpha_{2k-2}) (1 + \alpha_{2k-1}), \\
		\label{eq:pk}
		p_k & = \frac{1}{2} ( 1 - \alpha_{2k-1}) (1 - \alpha_{2k}).
	\end{align}
\end{Prop}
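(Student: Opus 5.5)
The plan is to construct the CMV basis for $U=SR$ with $e_0=\ket{\phi_0}$ explicitly, in closed form, and then read the Verblunsky coefficients off the first columns of the resulting matrix.

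\emph{Candidate basis.} The walk on the half-line only moves between neighbours and may stay in place. So at step $k$ the only basis states that can appear are $\ket{k}\otimes\ket{k}$, $\ket{k}\otimes\ket{k\pm1}$ and their swaps. This suggests organizing the space into two families of two-dimensional blocks:
\begin{itemize}
\item a ``vertex'' block spanned by $\ket{\phi_k}$ and $\ket{k}\otimes\ket{k}$;
\item an ``edge'' block spanned by $\ket{k}\otimes\ket{k+1}$ and $\ket{k+1}\otimes\ket{k}$.
\end{itemize}
Since $R$ is a reflection and $S$ an involution, the factorization $U=SR$ should mirror the CMV factorization $\mathcal{C}=\mathcal{L}\mathcal{M}$, both factors being direct sums of $2\times 2$ reflections $\Theta_k$. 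Real coefficients make this natural, because then $\mathcal{L}^2=\mathcal{M}^2=\II$.

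I would guess explicit vectors of the form
\[
e_{2k}\in\operatorname{span}\{\ket{k}\otimes\ket{k},\ \ket{k}\otimes\ket{k\pm1}\},\qquad
e_{2k+1}\in\operatorname{span}\{\ket{k}\otimes\ket{k+1},\ \ket{k+1}\otimes\ket{k}\}.
\]
Their coefficients would be square roots of $p_k,q_k,r_k$, chosen so that:
\begin{itemize}
\item $R$ acts on $\operatorname{span}\{e_{2k-1},e_{2k}\}$ as $\Theta_{2k}$;
\item $S$ acts on $\operatorname{span}\{e_{2k},e_{2k+1}\}$ as $\Theta_{2k+1}$.
\end{itemize}
Up to the ordering convention (i.e.\ whether $\mathcal{C}=\mathcal{LM}$ or $\mathcal{ML}$ corresponds to $SR$ or $RS$), this forces $\langle e_m,Ue_n\rangle$ to have exactly the CMV shape.

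\emph{Key steps, in order.}
\begin{enumerate}
\item Compute $U\ket{\phi_0}$ and $U^{-1}\ket{\phi_0}=RS\ket{\phi_0}$. Gram--Schmidt on these gives $e_1,e_2$ and yields $\alpha_0=\langle\phi_0,U\phi_0\rangle=2P_{00}-1=2r_0-1$. (Here $R\ket{\phi_0}=\ket{\phi_0}$, and $\langle\phi_0|S|\phi_0\rangle=P_{00}$.) This already matches the case $k=0$ of \eqref{eq:rk}, with the convention $\alpha_{-1}=-1$.
\item Find $\alpha_1$ from the next step, which is consistent with $p_0=\tfrac12(1-\alpha_{-1})(1-\alpha_0)$ only if we set $\alpha_{-1}=-1$. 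So the conventions must be pinned down first.
\item Prove by induction on $k$ that the span of the first $2k+1$ CMV vectors equals the span of the first $2k+1$ proposed vectors. The identifying entries are:
\begin{itemize}
\item the coefficient of $\ket{k}\otimes\ket{k}$ in the block, which gives $r_k$;
\item the overlap of $\ket{\phi_k}$ with the swapped edge states, which gives $p_k$ and $q_k$.
\end{itemize}
\item Match these entries with $\Theta$-block entries. This produces equations like
\[
q_k=\tfrac12(1+\alpha_{2k-2})(1+\alpha_{2k-1}),\qquad
p_k=\tfrac12(1-\alpha_{2k-1})(1-\alpha_{2k}).
\]
\item Recover $r_k=1-p_k-q_k$, and check that it agrees with \eqref{eq:rk} by algebra.
\end{enumerate}
Realness of the $\alpha_j$ follows because every matrix entry involved is a real square root of a probability. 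Positivity of the $\rho_j$ is fixed by choosing the signs of the basis vectors.

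\emph{Main obstacle.} The hard part is guessing the correct basis vectors, with the right mixing angles inside each $2\times2$ block, and verifying that they coincide with the Gram--Schmidt output of the alternating sequence $e_0,Ue_0,U^{-1}e_0,\dots$, including signs. My first attempt would be to parametrize each block rotation by the unknown $\alpha_j$ and impose that $R$ and $S$ act as the $\Theta$'s. This gives a recursion solvable one $\alpha$ at a time: each new $\alpha_j$ is determined by the previous ones and one transition probability, which is exactly the triangular structure of the stated formulas. Degenerate cases such as $p_k=0$ or $|\alpha_j|=1$ terminate the chain and should be handled separately, or excluded by assumption.
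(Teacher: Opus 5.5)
There is a genuine gap. Your plan rests on the ansatz that the CMV vectors are localized, with $e_{2k}\in\operatorname{span}\{\ket{k}\otimes\ket{k},\ket{k}\otimes\ket{k\pm1}\}$ and $e_{2k+1}\in\operatorname{span}\{\ket{k}\otimes\ket{k+1},\ket{k+1}\otimes\ket{k}\}$. This is false as soon as some $r_k>0$.

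Write $Se_{2k}=a\,e_{2k}+b\,e_{2k+1}$, as your block structure requires. Comparing $\ket{k-1}\otimes\ket{k}$ components removes the $\ket{k}\otimes\ket{k-1}$ part of $e_{2k}$, and the loop coefficient $y$ of $e_{2k}$ must satisfy $y=ay$. In your family, $e_{2k}$ is the only vector that can carry $\ket{k}\otimes\ket{k}$. Since $\phi_k$ lies in the span with loop coefficient $\sqrt{r_k}$, you would need $a=1$, $b=0$, i.e.\ the chain would have to end at $e_{2k}$. For $0<r_0<1$ this already fails at $k=0$: $\alpha_0=\langle\phi_0,S\phi_0\rangle=r_0<1$, and explicitly
\[
e_1=\rho_0^{-1}\bigl(\sqrt{r_0}\,p_0\,\ket{0}\otimes\ket{0}-r_0\sqrt{p_0}\,\ket{0}\otimes\ket{1}+\sqrt{p_0}\,\ket{1}\otimes\ket{0}\bigr).
\]
The paper's octahedron example shows the same thing further along: $e_3=\frac{1}{\sqrt{15}}(\sqrt2\ket{B,B}-2\ket{B,C}+3\ket{C,B})$ has a loop term, and $e_4$ is spread over three states.

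Gram--Schmidt delocalizes the vectors. So your ``guess and verify'' step, and the identifications in step~3 (loop coefficient $\leftrightarrow r_k$, edge overlaps $\leftrightarrow p_k,q_k$), have nothing to attach to. The ansatz only works in the bipartite case $r_k\equiv 0$ (Ehrenfests, free groups). Localization does survive at the level of spans: $\phi_j$ and $\ket{j+1}\otimes\ket{j}-\ket{j}\otimes\ket{j+1}$ generate the right flag. But that family is not orthonormal, so its matrix entries are not CMV entries.

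There are two further slips. Your value $\alpha_0=2r_0-1$ is wrong; it contradicts \eqref{eq:rk} at $k=0$, which gives $r_0=\alpha_0$. Your $\Theta$ indices are also shifted by one: with them $\langle e_0,Ue_0\rangle$ would be $\bar\alpha_1$.

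The missing idea is to work with the Krylov flag rather than explicit vectors. The paper only cites this result, but the Corollary stated right after it records this mechanism. Since $R\phi_0=\phi_0$, the sequence $e_0,Ue_0,U^{-1}e_0,\dots$ is $\phi_0,S\phi_0,RS\phi_0,SRS\phi_0,\dots$. Because $\Pi S\phi_j\in\operatorname{span}\{\phi_{j-1},\phi_j,\phi_{j+1}\}$, one gets $\operatorname{span}\{e_0,\dots,e_{2k}\}=\operatorname{span}\{\phi_0,\dots,\phi_k,S\phi_0,\dots,S\phi_{k-1}\}$. Hence $S$ acts on $\operatorname{span}\{e_{2k},e_{2k+1}\}$ as $\Theta_{2k}$, and $R$ acts on $\operatorname{span}\{e_{2k-1},e_{2k}\}$ as $\Theta_{2k-1}$.

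The structural fact that replaces your ansatz is $\phi_k\perp\operatorname{span}\{e_0,\dots,e_{2k-2}\}$, which holds because $\langle\phi_k,S\phi_j\rangle=\sqrt{P_{kj}P_{jk}}=0$ for $j\le k-2$. Thus $\phi_k\in\operatorname{span}\{e_{2k-1},e_{2k}\}$. Being $R$-fixed, it is (up to sign) the $+1$ eigenvector of $\Theta_{2k-1}$:
\[
\phi_k=\sqrt{\tfrac{1+\alpha_{2k-1}}{2}}\,e_{2k-1}+\sqrt{\tfrac{1-\alpha_{2k-1}}{2}}\,e_{2k}.
\]
From this the formulas follow:
\begin{itemize}
\item $r_k=\langle\phi_k,S\phi_k\rangle$ yields \eqref{eq:rk} immediately.
\item $\langle\phi_{k+1},S\phi_k\rangle^2=p_kq_{k+1}$ yields $p_kq_{k+1}=\frac14(1-\alpha_{2k-1})(1-\alpha_{2k}^2)(1+\alpha_{2k+1})$.
\item $p_k=1-q_k-r_k$, together with induction on $k$ starting from $\alpha_{-1}=-1$ and $q_0=0$, splits this product into \eqref{eq:qk} and \eqref{eq:pk}.
\end{itemize}
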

\begin{Rem}
	Recall that we put $\alpha_{-1} = -1$ what gives $q_0 = 0$, $r_0 = \alpha_0$, $p_0 = 1- \alpha_0$.
\end{Rem}
\begin{Cor}
	In \cite{Doliwa-Siemaszko-QW} it was also shown that when the cyclic vector $e_0$ is invariant with respect to the action of the coin operator $R$ in the Szegedy quantization then the corresponding CMV basis and Verblunsky coefficients can be found from the recurrence
	\begin{equation}
		S(e_{2k}) = \alpha_{2k} e_{2k} + \rho_{2k} e_{2k+1}, \qquad R(e_{2k+1}) = \alpha_{2k+1} e_{2k+1} + \rho_{2k+1} e_{2k+2}, \qquad k\geq 0. 
	\end{equation}
\end{Cor}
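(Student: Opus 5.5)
The plan is to run the Gram--Schmidt construction behind the CMV basis directly and use the two reflections $S$ and $R$ in place of powers of $U$. Since $U=SR$ with $S^2=R^2=\II$, the orthonormalization of $e_0, Ue_0, U^{-1}e_0, U^2e_0,\dots$ can be replaced by the orthonormalization of the alternating sequence
\begin{equation*}
e_0,\; Se_0,\; RSe_0,\; SRSe_0,\; \dots,
\end{equation*}
provided $Re_0=e_0$. Indeed, $U^{-1}e_0 = RSe_0$ and $Ue_0 = SRe_0 = Se_0$, and inductively the spans of the first $n$ vectors of the two sequences coincide.

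Concretely, define subspaces $\HH_n$ as the spans of the first $n+1$ vectors of the alternating sequence. I would prove by induction that $\HH_{2k}$ is invariant under $R$ and $\HH_{2k+1}$ is invariant under $S$. Both are reflections, hence self-adjoint involutions. The base case is $\HH_0=\mathrm{span}\{e_0\}$, which is $R$-invariant by the hypothesis $Re_0=e_0$.

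Given the invariance, take $e_{2k}\in\HH_{2k}\ominus\HH_{2k-1}$. Then $Se_{2k}\perp\HH_{2k-1}$, because $\HH_{2k-1}$ is $S$-invariant and $S$ is self-adjoint. Hence $Se_{2k}$ lies in $\mathrm{span}\{e_{2k},e_{2k+1}\}$, which gives the first recurrence; the coefficient is $\alpha_{2k}=\langle e_{2k},Se_{2k}\rangle$, and $\rho_{2k}\ge0$ by the choice of phase of $e_{2k+1}$. The same argument with the roles of $R$ and $S$ exchanged gives the second recurrence. Reality of the coefficients follows from self-adjointness, as $\langle e,Se\rangle\in\mathbb{R}$.

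It remains to identify these numbers with the Verblunsky coefficients of $\mathcal{C}=\mathcal{L}\mathcal{M}$. In the constructed basis, $S$ is block diagonal with $2\times2$ blocks on $(e_{2k},e_{2k+1})$ of the form $\Theta_{2k}=\begin{pmatrix}\alpha&\rho\\ \rho&-\alpha\end{pmatrix}$; the lower-right entry is forced by $S^2=\II$ together with symmetry. Likewise, $R$ is $1\oplus\Theta_1\oplus\Theta_3\oplus\cdots$. So $U=SR$ has exactly the factorized form $\mathcal{L}\mathcal{M}$ with real $\alpha_j$. Uniqueness of the CMV model for a cyclic vector then says these are the Verblunsky coefficients and this is the CMV basis, up to the sign conventions fixed by $\rho_j>0$.

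I expect the main obstacle to be this matching step. One has to check that the span equality of the alternating sequence with the CMV ordering $e_0,Ue_0,U^{-1}e_0,\dots$ holds at every stage, so that Gram--Schmidt produces the same vectors with the same phases. Otherwise one is only entitled to say that $U$ has $\mathcal{L}\mathcal{M}$ form in some basis. The degenerate case $\rho_j=0$ is where the cyclic subspace terminates, giving the finite CMV matrix with $|\alpha_{N-1}|=1$ from the Remark.
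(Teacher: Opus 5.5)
Your proof is correct. The paper gives no argument of its own here; it only quotes the result from the earlier Doliwa--Siemaszko work. Its surrounding text points to the mechanism you use: for real Verblunsky coefficients the factors in $\mathcal{C}=\mathcal{L}\mathcal{M}$ are reflections. Your identification $S\leftrightarrow\mathcal{L}=\Theta_0\oplus\Theta_2\oplus\cdots$ and $R\leftrightarrow\mathcal{M}=1\oplus\Theta_1\oplus\Theta_3\oplus\cdots$ turns this into a self-contained derivation.

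The step you flag as the main obstacle is actually immediate. Since $Re_0=e_0$, you have $U^n e_0=(SR)^{n-1}Se_0$ and $U^{-n}e_0=(RS)^n e_0$. So the CMV sequence $e_0,Ue_0,U^{-1}e_0,U^2e_0,\dots$ is literally your alternating sequence $w_0=e_0$, $w_{2k+1}=Sw_{2k}$, $w_{2k+2}=Rw_{2k+1}$, not merely span-equivalent to it. Gram--Schmidt therefore produces the same vectors with the same phases, and you do not need to invoke uniqueness of the CMV model at all.

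This observation also simplifies two other steps.
\begin{itemize}
\item The invariance induction becomes trivial: $R$ fixes $w_0$ and swaps $w_{2j-1}\leftrightarrow w_{2j}$, while $S$ swaps $w_{2j}\leftrightarrow w_{2j+1}$.
\item You get $\rho_j\ge 0$ directly from the standard positive-leading-coefficient convention. Write $e_{2k}=c\,w_{2k}+h$ with $c>0$ and $h\in\HH_{2k-1}$. Then $\langle e_{2k+1},Se_{2k}\rangle=c\,\langle e_{2k+1},w_{2k+1}\rangle\ge 0$, and the same argument works for $R$.
\end{itemize}

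One small point should be stated explicitly. To place $Se_{2k}$ in $\mathrm{span}\{e_{2k},e_{2k+1}\}$ you need $Se_{2k}\in\HH_{2k+1}$, not only $Se_{2k}\perp\HH_{2k-1}$. This follows from the $S$-invariance of $\HH_{2k+1}$, and the analogous fact holds for $R$ and $\HH_{2k+2}$.
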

\begin{Ex} \label{ex:hexahedron-r-Q-V}
	For the quantized Markov chain with the transition matrix~\eqref{eq:red-hexahedron-P} obtained by lumping the random walk on the hexahedron graph, as presented in Examples~\ref{ex:hexahedron-E} and~\ref{ex:hexahedron-r-Q}, and with the cyclic vector $e_0 = | \phi_A \rangle$, the corresponding CMV basis vectors and the Verblunsky coefficients read as follows
	\begin{gather*}
		e_0 = \ket{A}\otimes \ket{B}, \qquad \alpha_0 = 0, \qquad e_1 = \ket{B} \otimes \ket{A}, \qquad \alpha_1 = -\frac{1}{3}, \qquad e_2 = \ket{B}\otimes \ket{C}, \qquad \alpha_2 = 0, \\ e_3 =\ket{C}\otimes \ket{B}, \qquad 
		\alpha_3 = \frac{1}{3}, \qquad e_4 = \ket{C}\otimes \ket{D}, \qquad \alpha_4 = 0, \qquad e_5 = \ket{D}\otimes \ket{C}, \qquad \alpha_5 = 1.
	\end{gather*}
\end{Ex}
The next Example demonstrates that choosing an appropriate initial vector, which does not have to be cyclic vector for the whole space, in the CMV procedure may result in a quantum walk in a reduced space. The correspondence between the two quantum walks suggests general aggregation procedure which we develop in the next Section.
\begin{Ex}
	Consider quantum walk operator on the hexahedron graph constructed in Example~\ref{ex:hexahedron-Q} and start the CMV procedure with the initial vector $e_0 = \ket{\phi_{000}}$. With the operators $R$ defined by the vectors $\ket{\phi_i}$, $i\in\BB^3$, given in \eqref{eq:phi-hex} we obtain \emph{the same} Verblunsky coefficients as in the above Example~\ref{ex:hexahedron-r-Q-V} and the following CMV basis
	\begin{gather*}
		e_0 = \frac{1}{\sqrt{3}}\ket{000}\otimes\left( \ket{001} + \ket{010} + \ket{100} \right), \qquad  e_1 = \frac{1}{\sqrt{3}}\left( \ket{001} + \ket{010} + \ket{100} \right) \otimes \ket{000}, \\
		 e_2 = \frac{1}{\sqrt{6}} \left( \ket{001}\otimes \ket{011}
+ \ket{001}\otimes \ket{101} + \ket{010}\otimes \ket{011} + \ket{010}\otimes \ket{110}	+
\ket{100}\otimes \ket{101}	+ \ket{100}\otimes \ket{110} \right) , \\ 
		 e_3 =\frac{1}{\sqrt{6}} \left( \ket{011}\otimes \ket{001}
		 + \ket{011}\otimes \ket{010} + \ket{101}\otimes \ket{001} + \ket{101}\otimes \ket{100}	+
		 \ket{110}\otimes \ket{010}	+ \ket{110}\otimes \ket{100} \right) , \\ 
		 e_4 = \frac{1}{\sqrt{3}}\left( \ket{011} + \ket{101} + \ket{110} \right) \otimes \ket{111}, \qquad e_5 = \frac{1}{\sqrt{3}}\ket{111}\otimes\left( \ket{011} + \ket{101} + \ket{110} \right).
	\end{gather*}
\end{Ex}

\begin{Rem}
The structure of the CMV representation implies that any quantum walk with  a time independent evolution operator  can be mapped to a corresponding walk on a weighted path.
\end{Rem}

\section{Quantum aggregation} \label{sec:qa}
\subsection{Formulation of the problem}
Consider Markov chain with the state set $V$ and transition matrix $P$, the corresponding quantum walk is constructed by equations~\eqref{eq:phi-i}-\eqref{eq:U}. Given partition $\tilde{V}$ of $V$, construct the corresponding coarse-grained graph with the edge set $\tilde{E}$, as described in equation~\eqref{eq:tilde-E}. Consider a random walk on the graph $\tilde{G} = (\tilde{V}, \tilde{E})$ with a transition matrix $\tilde{P}$ --- we do not specify at the moment a relation between the both transition matrices. Our goal is to investigate the aggregation procedure on the quantum level. 

Denote by $U$ and $\tilde{U}$ the corresponding evolution operators of the quantized random walks. Inspired by Examples presented in the previous Section, define the aggregated states
\begin{equation} \label{eq:aggregated-uv}
	\ket{u,v} = \sum_{\substack{i \in u, j \in v \\
	(i,j)\in E}} a_{ij} \ket{i} \otimes \ket{j}, \qquad (u,v)\in \tilde{E},
\end{equation}
where properties of the matrix $(a_{ij})_{i,j\in V}$  will be determined from the condition that the action of the evolution operator $U$ on $\ket{u,v}$ must be the same as the action of the evolution operator $\tilde{U}$ on $\ket{u}\otimes \ket{v}$. By definition $a_{ij} = 0$ for $(i,j)\notin E$. This condition is automatically satisfied if we introduce $q_{ij}$ such $a_{ij} = q_{ij} \sqrt{P_{ij}}$. It will be convenient to allow $q_{ij}\neq 0$, even if $P_{ij}=0$.
\begin{Rem}
	Notice that the above condition can be applied to generic quantum walks on graphs. 	We do not study the quantum aggregation problem in full generality, but we restrict our attention to the Szegedy quantized random walks and real coefficients $a_{ij}$ only.
\end{Rem}
Because the evolution operator is defined as superposition of the reflection and swap operators, we demand that the following two conditions are satisfied separately assuming the correspondence
	\begin{equation}
	\ket{u,v} \leftrightarrow |u\rangle \otimes |v \rangle. 
\end{equation}
1. The action of the projection operator $\Pi$, in the Hilbert space of the quantum walk on $G$, reduces properly, compare with equation~\eqref{eq:Pi}, 
	\begin{equation} \label{eq:red-Pi}
		\Pi ( \ket{u,v} )  = |\phi_u \rangle \sqrt{\tilde{P}_{uv}}, \qquad \text{where} \qquad  |\phi_u \rangle = \sum_{v\in \tilde{V}} \sqrt{\tilde{P}_{uv}} \ket{u,v} .
	\end{equation}	
2. The action of the swap operator $S$, in the Hilbert space of the quantum walk on $G$, reduces properly  
	\begin{equation} \label{eq:red-S}
		S(\ket{u,v} ) = \ket{v,u} .
	\end{equation}
\begin{Ex}
	In the reduction of quantum walk on the hexahedron graph considered in Examples~\ref{ex:hexahedron-r-Q} and \ref{ex:hexahedron-r-Q-V} the correspondence follows from identification of the vectors $e_i$, $i = 0, \dots , 5$, and gives
		\begin{gather*}
			\ket{A,B} = \frac{1}{\sqrt{3}} | 000\rangle \otimes \left(  | 001\rangle +  | 010\rangle +  | 100\rangle \right), 	\qquad
			\ket{B,A} = \frac{1}{\sqrt{3}}  \left(  | 001\rangle +  | 010\rangle +  | 100\rangle \right) \otimes | 000\rangle , \\
			\ket{B,C} = \frac{1}{\sqrt{6}} \left( | 001\rangle \otimes | 011\rangle  + | 001\rangle \otimes | 101\rangle +  | 010\rangle \otimes | 011\rangle +  | 010\rangle \otimes | 110\rangle +  | 100\rangle \otimes | 101\rangle +  | 100\rangle \otimes | 110\rangle\right), \\	
			\ket{C,B} = \frac{1}{\sqrt{6}} \left( | 011\rangle \otimes | 001\rangle  + | 011\rangle \otimes | 010\rangle +  | 101\rangle \otimes | 001\rangle +  |  101 \rangle \otimes | 100\rangle +  | 110\rangle \otimes | 010\rangle +  | 110\rangle \otimes | 100\rangle\right), \\	
			\ket{C,D} = \frac{1}{\sqrt{3}}  \left(  | 011\rangle +  | 101\rangle +  | 110\rangle \right) \otimes | 111\rangle , \qquad	
			\ket{D,C} = \frac{1}{\sqrt{3}} | 111\rangle \otimes \left(  | 011\rangle +  | 101\rangle +  | 110\rangle \right). 	
		\end{gather*}
One can verify directly that the reduction conditions are satisfied. Notice however that the identification does not hold in all the aspects. For example, unlike the state vector $|B\rangle \otimes |C \rangle $, the corresponding vector $\ket{B,C}$ of the aggregated model  is not a product state. The entanglement of the walker with coin can be checked by finding the corresponding density matrix reduced with respect to the coin space (the second factor)
		\begin{equation}
			\mathrm{tr}_C  \ket{B,C}  \bra{B,C} = \frac{1}{6} \begin{pmatrix}
				2 & 1 & 1 \\
				1 & 2 & 1 \\
				1 & 1 & 2
			\end{pmatrix},
		\end{equation}
		which has eigenvalues $\frac{2}{3}, \frac{1}{6}, \frac{1}{6}$. The corresponding von Neumann entropy reads
		\begin{equation}
			S = - \left(\frac{1}{6} \log \frac{1}{6} + \frac{1}{6} \log \frac{1}{6} +\frac{2}{3} \log \frac{2}{3} \right)  = \log 3 - \frac{1}{3} .
		\end{equation}	
Similar remark applies also to the state vector 
		\begin{equation}
			\ket{\phi_B} = \frac{1}{\sqrt{3}}	\ket{B,A} + \frac{\sqrt{2}}{\sqrt{3}}	\ket{B,C},
		\end{equation}
		whose reduced density matrix reads
		\begin{equation}
			\mathrm{tr}_C  \ket{\phi_B} \bra{\phi_B} = \frac{1}{9} \begin{pmatrix}
				3 & 2 & 2 \\
				2 & 3 & 2 \\
				2 & 2 & 3
			\end{pmatrix},
		\end{equation}
		and has eigenvalues $\frac{7}{9}, \frac{1}{9}, \frac{1}{9}$. 
\end{Ex}

\subsection{Solution of the problem}
Before stating a proposition, which is the main result of the paper, let us present a reasoning, which should help in digesting its proof. Let us analyze first the condition \eqref{eq:red-Pi}. By expressing equation~\eqref{eq:Pi} in the basis $\ket{i}\otimes \ket{j}$, $i,j \in V$,  we obtain
\begin{equation}
\sqrt{ P_{ij}}\sum_{k\in v} q_{ik}P_{ik} =	q_{ij} \sqrt{ P_{ij}\tilde{P}_{uv} \tilde{P}_{uw}} , \qquad i \in u, \quad j \in w.
\end{equation}
which is satisfied trivially when $P_{ij}=0$. When $P_{ij}\neq 0$ then the above equation implies that $q_{ij}$ may be chosen in such a way that it does not depend on  a particular index $j\in w$. This allows to make the replacement $q_{ij} = s_{iw}$, and gives equation
\begin{equation} 
s_{iv} \sum_{k\in v} P_{ik} = s_{iw} \sqrt{\tilde{P}_{uv} \tilde{P}_{uw}} , \qquad i\in u.
\end{equation}
Notice that for $w=v$ and under assumption $s_{iv}\neq 0$ this gives (for $s_{iv}=0$ the above equation is trivially satisfied) the strong lumpability condition \eqref{eq:lumpability}. 

If so, that again under assumption $\tilde{P}_{uv} \neq 0$, we obtain the following relation
\begin{equation} \label{eq:sPuv}
	s_{iv} \sqrt{\tilde{P}_{uv}}  = s_{iw} \sqrt{ \tilde{P}_{uw}} , \qquad i\in u.
\end{equation}
The above system of equations is self-consistent in the following sense. We can calculate $s_{iw}$ directly from $s_{iv}$ and probabilities $\tilde{P}_{uv}$ and $\tilde{P}_{uw}$. Another option is to use an intermediate coefficients $s_{iz}$ and corresponding probabilities. It turns out that both paths give the same result
\begin{equation*}
	s_{iw}^2 = s_{iz}^2 \frac{\tilde{P}_{uz}}{\tilde{P}_{uw}} = 
	s_{iv}^2 \frac{\tilde{P}_{uv}}{\tilde{P}_{uz}} \frac{\tilde{P}_{uz}}{\tilde{P}_{uw}} = s_{iv}^2 \frac{\tilde{P}_{uv}}{\tilde{P}_{uw}}.
\end{equation*}
\begin{figure}[!ht]
	\begin{center}
		\includegraphics[width=7cm]{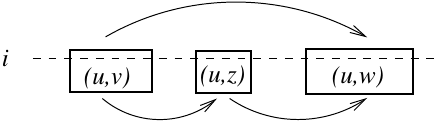}
	\end{center}
	\caption{Self-consistency of equations \eqref{eq:sPuv}}
	\label{fig:PP}
\end{figure}

Condition \eqref{eq:red-S} implies symmetry $a_{ij} = a_{ji}$ of the matrix, and in terms of the linking coefficients $s_{iv}$ and the transition probabilities reads
\begin{equation} \label{eq:sPij}
	s_{iv} \sqrt{P_{ij}}  = s_{ju} \sqrt{P_{ji}} , \qquad i\in u, \quad j\in v.
\end{equation}
It implies, in particular, that for non-trivial linking coefficients
\begin{equation} \label{eq:PP}
	P_{ij}>0 \Leftrightarrow P_{ji} > 0, \qquad i,j \in V.
\end{equation}

\begin{Rem}
	Notice that requirement \eqref{eq:sPij} has the form similar to that of detailed balance condition~\cite{Norris} known from the theory of reversible Markov chains, which is equivalent to certain nonlinear relation (Kolmogorov's cycle criterion~\cite{Kelly}) between elements of the transition matrix. 
\end{Rem}		

Moreover, assume that for $i_1, i_2 \in u$ and $j_1, j_2 \in v$ all the corresponding pairs of probabilities in $(u,v)$ block are non-zero.
\begin{figure}[!ht]
	\begin{center}
		\includegraphics[width=6cm]{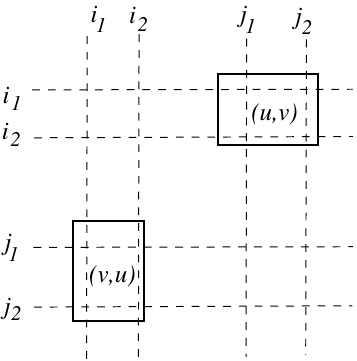}
	\end{center}
	\caption{Self-consistency of equations \eqref{eq:sPij}}
	\label{fig:PPPP}
\end{figure}
Given $s_{i_1 v}$, then $s_{i_2 v}$ can be find in two ways: either using $s_{j_1 u}$
\begin{equation*}
	s_{i_2 v}^2 = 
		s_{j_1 u}^2 \frac{P_{j_1 i_2}}{P_{i_2 j_1}} = s_{i_1 v}^2 \frac{P_{i_1 j_1}}{P_{j_1 i_1}} \frac{P_{j_1 i_2}}{P_{i_2 j_1}} , 
\end{equation*}
either $s_{j_2 u}$
\begin{equation*}
	s_{i_2 v}^2 = 
			s_{j_2 u}^2 \frac{P_{j_2 i_2}}{P_{i_2 j_2}} = s_{i_1 v}^2 \frac{P_{i_1 j_2}}{P_{j_2 i_1}} \frac{P_{j_2 i_2}}{P_{i_2 j_2}} .
\end{equation*}
Since both ways should give the same result we obtain the following constraint on the transition probabilities 
\begin{equation}
	\label{eq:PPPP}
	P_{i_1 j_1} P_{j_1 i_2} P_{i_2 j_2} P_{j_2 i_1} =
P_{i_1 j_2 } P_{ j_2 i_2} P_{i_2 j_1 } 	P_{j_1 i_1 }  .	
\end{equation}

Let us consider mutual compatibility of both equations  \eqref{eq:sPuv} and \eqref{eq:sPij}. Given $i\in u$, $j\in v$ and $k\in w$, assume that probabilities $P_{ij}$, $P_{ik}$ and $P_{jk}$ do not vanish. Starting from $s_{iv}$ one can obtain $s_{kv}$ in two different ways:
\begin{figure}[!ht]
\begin{center}
	\includegraphics[width=7cm]{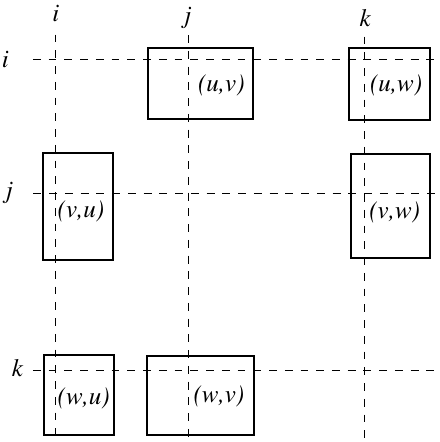}
\end{center}
\caption{Consistency of equations  \eqref{eq:sPuv} and \eqref{eq:sPij}}
\label{fig:PPP}
\end{figure}
\begin{align*}
	I: \qquad & s_{iv} \stackrel{\eqref{eq:sPij}}{\rightsquigarrow} s_{ju} \stackrel{\eqref{eq:sPuv}}{\rightsquigarrow} s_{jw} 
	\stackrel{\eqref{eq:sPij}}{\rightsquigarrow} s_{kv}, \\
	II: \qquad & s_{iv} \stackrel{\eqref{eq:sPuv}}{\rightsquigarrow} s_{iw} \stackrel{\eqref{eq:sPij}}{\rightsquigarrow} s_{ku} 
	\stackrel{\eqref{eq:sPuv}}{\rightsquigarrow} s_{kv} ,
\end{align*}
what implies
\begin{equation*}
	s_{kv}^2 = s_{iv}^2 \frac{P_{ij}}{P_{ji}} \frac{\tilde{P}_{vu}}{\tilde{P}_{vw}} \frac{P_{jk}}{P_{kj}} = 
		s_{iv}^2  \frac{\tilde{P}_{uv}}{\tilde{P}_{uw}} \frac{P_{ik}}{P_{ki}} \frac{\tilde{P}_{wu}}{\tilde{P}_{wv}},
\end{equation*}
or
\begin{equation} \label{eq:PPP}
\frac{\tilde{P}_{uv} \tilde{P}_{vw}\tilde{P}_{wu}}
{\tilde{P}_{uw} \tilde{P}_{wv}\tilde{P}_{vu}} =
\frac{P_{ij} P_{jk} P_{ki}}{P_{ik} P_{kj} P_{ji}} .
\end{equation}

Finally notice, that the normalization condition of the state vector
\begin{equation} \label{eq:aggr-state}
	\ket{u,v} = \sum_{i\in u, j\in v} s_{iv}\sqrt{P_{ij}} \ket{i}\otimes \ket{j}
\end{equation} 
together with lumpability \eqref{eq:lumpability} imply that, whenever $\tilde{P}_{uv} > 0$,
\begin{equation} \label{eq:s-norm}
\tilde{P}_{uv} \sum_{i\in u} s_{iv}^2 = 1.
\end{equation}
The above condition is compatible with equation~\eqref{eq:sPuv}, as both directly lead to
\begin{equation*}
	\tilde{P}_{uw} \sum_{i\in u} s_{iw}^2 = 1.
\end{equation*}
Moreover, equations \eqref{eq:s-norm} and \eqref{eq:sPij} together with the lumpability condition \eqref{eq:lumpability} imply
\begin{equation*}
\tilde{P}_{vu} \sum_{j\in v} s_{ju}^2 = \sum_{j\in v} s_{ju}^2 \sum_{i\in u} P_{ji} = \sum_{i\in u, j\in v} s_{iv}^2 P_{ij} = 1,
\end{equation*}
what gives consistency of the normalization condition with \eqref{eq:sPij}. 

Above, we derived the (necessary) conditions resulting from the permutability of the aggregation--quantization operations. Below, we demonstrate that these are, in fact, sufficient conditions as well.
\begin{Prop} \label{prop:P}
Given lumpable Markov chain with respect to the partition $\tilde{V}$ of its state space $V$, which satisfies the weak reversability condition \eqref{eq:PP} and the consistency conditions \eqref{eq:PPPP}-\eqref{eq:PPP}. When the corresponding graph is connected
then there exists unique aggregation of Szegedy's quantization of the original chain, with non-negative linking coefficients, isomorphic to quantization of the lumped chain.
\end{Prop}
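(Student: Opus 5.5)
The plan is to reduce the statement to solving the linear system \eqref{eq:sPuv}, \eqref{eq:sPij} for the nonnegative linking coefficients $s_{iv}$, $i\in u$, $(u,v)\in\tilde{E}$. Once such a solution exists and is normalized by \eqref{eq:s-norm}, the reasoning preceding the Proposition can be run backwards. Specifically:
\begin{itemize}
\item \eqref{eq:sPuv} together with lumpability \eqref{eq:lumpability} gives $\Pi(\ket{u,v}) = \ket{\phi_u}\sqrt{\tilde{P}_{uv}}$.
\item \eqref{eq:sPij} gives $S\ket{u,v}=\ket{v,u}$.
\item The states $\ket{u,v}$ have disjoint supports in the basis $\ket{i}\otimes\ket{j}$, so they are orthonormal.
\end{itemize}
Hence $\ket{u}\otimes\ket{v}\mapsto\ket{u,v}$ is an isometry intertwining $R$ with $\tilde{R}$ and $S$ with $\tilde{S}$, thus $U$ with $\tilde{U}$. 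This is the required isomorphism.

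Since all equations relate squares, I work with $x_{iv}=s_{iv}^2\ge 0$. The equations become multiplicative "transport rules" along a graph $\Gamma$. Its nodes are the pairs $(i,v)$ with $i\in u$ and $\tilde{P}_{uv}>0$. Its edges are of two kinds. Type A edges join $(i,v)$ and $(i,w)$, with ratio $x_{iw}/x_{iv}=\tilde{P}_{uv}/\tilde{P}_{uw}$. Type B edges join $(i,v)$ and $(j,u)$ whenever $P_{ij}>0$ (hence $P_{ji}>0$ by \eqref{eq:PP}), with ratio $x_{ju}/x_{iv}=P_{ij}/P_{ji}$. Connectedness of $G$ implies that $\Gamma$ is connected. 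Type A edges connect all nodes sharing the vertex $i$, and type B edges follow the edges of $G$; every $(i,v)$ is reached because $\tilde{P}_{uv}>0$ forces some $j\in v$ with $P_{ij}>0$. So, fixing one value $x_{i_0v_0}>0$, every $x$ is determined by transporting along a path, and all resulting values are strictly positive. Uniqueness of nonnegative $s$ then follows: on a connected $\Gamma$ all $x$ are proportional, and a single overall scale remains. Normalization \eqref{eq:s-norm} for one pair $(u,v)$ fixes that scale, and as shown before the Proposition, normalization then propagates consistently along both edge types.

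The main obstacle is existence, i.e.\ path-independence of the transport, which amounts to showing that the product of ratios around every cycle in $\Gamma$ equals $1$. My approach is to show that the cycle space of $\Gamma$ is generated by the elementary cycles already analysed in the text.
\begin{itemize}
\item Cycles of type A only, inside one fibre $\{(i,\cdot)\}$, are trivial because the ratios are of potential form $\tilde{P}_{uv}/\tilde{P}_{uw}$.
\item The square cycles of Figure~\ref{fig:PPPP} give \eqref{eq:PPPP}.
\item The mixed triangles of Figure~\ref{fig:PPP} give \eqref{eq:PPP}.
\end{itemize}
For a general cycle, I project it to a closed walk $i_0 i_1\dots i_m=i_0$ in $G$ together with the block labels. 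Type A moves can be absorbed, since they only change the second label at a fixed vertex. The cycle product then reduces to $\prod P_{i_ki_{k+1}}/P_{i_{k+1}i_k}$ times a product of $\tilde{P}$-ratios depending only on the sequence of blocks visited. I would then decompose the closed walk into walks that go back and forth between two blocks, handled by \eqref{eq:PPPP}, and triangles through three blocks, handled by \eqref{eq:PPP}, using induction on the length of the walk. Each step shortcuts a segment $i_{k-1}i_ki_{k+1}$ while keeping track of how the labels change.

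I expect this combinatorial step to be the delicate part. A shortcut requires the relevant transitions to be nonzero, for example $P_{i_{k-1}i_{k+1}}>0$, so the reduction must be organised using only edges that actually exist. If direct shortcuts are unavailable, I would fall back on interpreting \eqref{eq:PPPP}--\eqref{eq:PPP} as the stated consistency hypotheses for all cycles, in the Kolmogorov-criterion sense.
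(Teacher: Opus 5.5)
Your route is the paper's route. Its proof also fixes one $s_{iv}$, propagates it by \eqref{eq:sPuv} (change of cluster) and \eqref{eq:sPij} (change of row), appeals to connectivity, fixes the scale by \eqref{eq:s-norm}, and reverses the preliminary reasoning. The difference is that the paper only asserts ``compatibility of the moves'', whereas you correctly single out path-independence as the real issue. The reduction you propose for it cannot be carried out, because the stated hypotheses do not imply it.

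Equation \eqref{eq:sPuv} forces $s_{iv}^2=c_i/\tilde P_{uv}$, with one unknown $c_i>0$ per vertex. Then \eqref{eq:sPij} becomes the detailed-balance system $c_iP_{ij}/\tilde P_{uv}=c_jP_{ji}/\tilde P_{vu}$ ($i\in u$, $j\in v$) on $G$. This system is solvable if and only if
\[
\prod_{k=0}^{m-1}\frac{P_{i_ki_{k+1}}\,\tilde P_{v_{k+1}v_k}}{P_{i_{k+1}i_k}\,\tilde P_{v_kv_{k+1}}}=1
\quad\text{for every closed walk } i_0i_1\dots i_m=i_0 \text{ in } G,\ i_k\in v_k .
\]
Conditions \eqref{eq:PPPP} and \eqref{eq:PPP} are exactly this condition for $4$-cycles alternating between two clusters and for triangles. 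Such cycles do not generate the cycle space of $G$ in general. In particular, a walk going back and forth between two clusters is covered by \eqref{eq:PPPP} only when it is a $4$-cycle.

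Here is a concrete counterexample. Take the hexagon with vertices $0,\dots,5$, $P_{k,k+1}=p$, $P_{k,k-1}=1-p$ (indices mod $6$), $p\neq\frac12$, and the partition $u=\{0,2,4\}$, $v=\{1,3,5\}$.
\begin{enumerate}
\item The chain is lumpable with $\tilde P_{uv}=\tilde P_{vu}=1$ and satisfies \eqref{eq:PP}.
\item It has no triangles, so \eqref{eq:PPP} is vacuous.
\item \eqref{eq:PPPP} reduces to identities, since two vertices of equal parity have a single common neighbour.
\end{enumerate}
Write $s_k$ for the only linking coefficient of vertex $k$. Then \eqref{eq:sPij} gives $s_{k+1}^2=s_k^2\,p/(1-p)$, and going around the hexagon yields $(p/(1-p))^6=1$, a contradiction. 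In fact, here \eqref{eq:red-Pi} forces $\ket{u,v}=\Pi\ket{u,v}\in\mathrm{span}\{\ket{\phi_i}\}_{i\in u}$. Hence no states $\ket{u,v}$ satisfying \eqref{eq:red-Pi}--\eqref{eq:red-S} exist at all, even with general coefficients.

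So your ``fallback'' is not a reinterpretation of \eqref{eq:PPPP}--\eqref{eq:PPP}. It is a genuinely stronger hypothesis, namely the displayed cycle condition, and it is exactly what is needed: it is necessary by the computation above. With it, the rest of your argument is complete: existence by transport, uniqueness of the nonnegative solution on the connected graph $\Gamma$, propagation of \eqref{eq:s-norm}, and the intertwining isometry onto the disjointly supported orthonormal states $\ket{u,v}$. The same gap is present in the paper's proof as written, and the Proposition needs this strengthened hypothesis. For the equitable partitions of Corollary~\ref{cor:hom} nothing is lost, since $c_i=1/|u|$ solves the system explicitly because $d_{uv}|u|=d_{vu}|v|$.
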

\begin{proof}
Our goal is to construct vectors $s_v = (s_{iv})_{i\in V}$, $v\in \tilde{V}$ of linking coefficients, which satisfy constraints~\eqref{eq:sPuv}, \eqref{eq:sPij} and \eqref{eq:s-norm}. They allow to construct aggregated quantum states \eqref{eq:aggr-state}. Then, by reversing the above reasoning we infer that the proper reduction conditions \eqref{eq:red-Pi} and \eqref{eq:red-S} will be satisfied.	
	
Consider $u,v\in \tilde{V}$ such that $\tilde{P}_{uv}>0$, and choose $i\in u$ and $j\in v$ with $P_{i j}>0$. Declare 
$s_{i v} = a$, and by compatible equations~\eqref{eq:sPuv} and \eqref{eq:sPij}, which allow to move along the row changing the cluster and to change the row, respectively, define the linking coefficients. By the graph connectivity and compatibility of the moves this process gives all the coefficients dependent homogeneously on the parameter $a$. Finally, the normalization condition~\eqref{eq:s-norm} fixes the parameter up to a sign.
\end{proof}

\begin{Rem}
	The procedure of obtaining of the nonlinear  conditions \eqref{eq:PPPP}-\eqref{eq:PPP} for lumpable Markov chain as consistency of the linear equations~\eqref{eq:sPuv}-\eqref{eq:sPij} resembles the fundamental ideas of the theory of integrable systems~\cite{IDS}.  It can be also compared with Kolmogorov's cycle condition for transition probabilities of reversible Markov chains~\cite{Kelly} being equivalent to the detailed balance condition.
\end{Rem}
\begin{Rem}
	The graph connectivity condition is not essential, as the procedure can be performed for each its connected component separately.
\end{Rem}

\begin{Cor} \label{cor:hom}
	A natural example of Markov chains satisfying assumptions of Proposition~\ref{prop:P} is provided by random walks on graphs equipped with equitable partitions~\cite{GodsilRoyle}, which are characterized that the number
	of neighbors in $v$ of a vertex $i\in u$, denoted by $d_{uv}$, depends only on the choice of $u$ and $v$. Then every vertex in $u$ has the same valency
	\begin{equation*}
		d_u = \sum_{v\in \tilde{V}} d_{uv},
	\end{equation*}
and the natural random walk with transition probability $1/d_u$ from vertex $i\in u$ to its neighbors satisfies condition \eqref{eq:PPPP} and is lumpable. The lumped transition matrix and the aggregated quantum state read 
\begin{equation}
\tilde{P}_{uv} = \frac{d_{uv}}{d_u} , \qquad 	\ket{u,v} = \frac{1}{d_{uv} |u|} \sum_{\substack{i \in u, j \in v \\ (i,j) \in E}} \ket{i}\otimes \ket{j}.
\end{equation} 
To show that also condition~\eqref{eq:PPP} is satisfied it is enough to observe that the number of edges connecting vertices of the clusters $u$ and $v$ equals
\begin{equation*}
	d_{uv} |u| = d_{vu} |v|.
\end{equation*}
\end{Cor}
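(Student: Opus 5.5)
The plan is to verify the hypotheses of Proposition~\ref{prop:P} directly from the combinatorics of the equitable partition, and then to produce the linking coefficients explicitly. We do not follow the propagation argument in the proof of the Proposition; uniqueness then identifies our explicit solution with the aggregation the Proposition provides.

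\emph{Lumpability and weak reversibility.} Since the graph is undirected, $P_{ij}>0$ iff $(i,j)\in E$ iff $P_{ji}>0$, which gives \eqref{eq:PP}. For $i\in u$ every vertex has valency $d_u$, so $P_{ij}=1/d_u$ on edges. Hence
\[
\sum_{j\in v}P_{ij}=\frac{d_{uv}}{d_u},
\]
independently of $i\in u$. This is the row sum criterion \eqref{eq:lumpability} with $\tilde P_{uv}=d_{uv}/d_u$.

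\emph{Condition \eqref{eq:PPPP}.} Take $i_1,i_2\in u$ and $j_1,j_2\in v$ with all four block transitions nonzero. Both sides of \eqref{eq:PPPP} consist of two factors $1/d_u$ (steps leaving $u$) and two factors $1/d_v$ (steps leaving $v$). Hence both sides equal $d_u^{-2}d_v^{-2}$.

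\emph{Condition \eqref{eq:PPP}.} For $i\in u$, $j\in v$, $k\in w$, the right-hand side equals
\[
\frac{d_u^{-1}d_v^{-1}d_w^{-1}}{d_u^{-1}d_w^{-1}d_v^{-1}}=1 .
\]
On the left-hand side the valencies cancel in the same way, leaving
\[
\frac{d_{uv}d_{vw}d_{wu}}{d_{uw}d_{wv}d_{vu}} .
\]
Double counting the edges between two clusters gives $d_{uv}|u|=d_{vu}|v|$. Substituting this for each ratio $d_{uv}/d_{vu}=|v|/|u|$, and similarly for the other two pairs, turns the product into
\[
\frac{|v|}{|u|}\cdot\frac{|w|}{|v|}\cdot\frac{|u|}{|w|}=1 .
\]
I expect this step to be the only one needing any idea beyond substitution, and the edge-count identity is what resolves it.

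\emph{Explicit aggregated states.} I would make the ansatz $s_{iv}=c_{uv}$ for all $i\in u$, and impose the three constraints on the linking coefficients.
\begin{itemize}
\item Normalization \eqref{eq:s-norm} reads $(d_{uv}/d_u)\,|u|\,c_{uv}^2=1$, so $c_{uv}=\sqrt{d_u/(d_{uv}|u|)}$.
\item Equation \eqref{eq:sPuv} becomes $c_{uv}\sqrt{d_{uv}/d_u}=1/\sqrt{|u|}$. The right-hand side does not depend on $v$, so \eqref{eq:sPuv} holds.
\item Equation \eqref{eq:sPij} becomes $c_{uv}/\sqrt{d_u}=c_{vu}/\sqrt{d_v}$, that is $1/\sqrt{d_{uv}|u|}=1/\sqrt{d_{vu}|v|}$. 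This is again the edge-count identity.
\end{itemize}
These coefficients are non-negative, so by the uniqueness part of Proposition~\ref{prop:P} they give the aggregation. Then $a_{ij}=s_{iv}\sqrt{P_{ij}}$ is the constant $1/\sqrt{d_{uv}|u|}$ on each edge between $u$ and $v$. This is the reciprocal square root of the number of such edges, consistent with the hexahedron example, where $|B,C\rangle$ has coefficient $1/\sqrt6$. The displayed prefactor in the statement should accordingly be read as $1/\sqrt{d_{uv}|u|}$, the normalized uniform superposition.
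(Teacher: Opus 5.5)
Your proof is correct and follows the paper's route. The valency factors cancel in \eqref{eq:PPPP}, and \eqref{eq:PPP} reduces to the double-counting identity $d_{uv}|u| = d_{vu}|v|$; your explicit constant-per-block coefficients $s_{iv}=\sqrt{d_u/(d_{uv}|u|)}$ just spell out what the paper leaves to Proposition~\ref{prop:P}. You are also right about the prefactor: it must be $1/\sqrt{d_{uv}|u|}$, as the hexahedron ($1/\sqrt{6}$ for $|B,C\rangle$), tetrahedron and hypercube examples confirm, so the displayed $1/(d_{uv}|u|)$ is a typo.
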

\begin{Rem}
Partitions with the above property are provided, for example, as orbits of a subgroup of the automorphism group of a  graph. 
Perfect state transfer via continuous-time quantum walk on quotient graphs obtained from equitable partitions was studied in \cite{Salimi,Ge-Tamon,Godsil-transfer,Bachman-Tollefson,KemptonTolbert}. As it was mentioned in the Introduction, quantum walk based search algorithms in relation to equitable partitions were considered in~\cite{Ide}.
\end{Rem}

\begin{Rem}
	Assumptions of Corollary \ref{cor:hom} are satisfied in the case of distance-regular graphs~\cite{BCN,DevroyeSbihi,DamKoolenTanaka} and their natural partitions into slices/spheres  consisting of points with constant distance from a fixed vertex of such a graph. 
\end{Rem}
Recall, that a connected simple graph is called distance-regular if for any two vertices $x$ and $y$, the number of vertices at distance $j$ from $x$ and at distance $k$ from $y$ depends only upon $j$, $k$, and the distance between $x$ and $y$. This simple definition forces a great deal of structure upon such graphs. In connection to quantum computation we remark that entanglement of free fermions on distinguished distance-regular graphs, such as the Hadamard, Hamming or Johnson graphs, was the subject of recent works~\cite{BernardCrampeVinet-H,BernardCrampeVinet-J,CrampeGuoVinet-H}.

\section{Quantum walks on Platonic solids and their reductions}
Let us present examples of the aggregation of quantum walks. Because the possibility of strong lumping is related in certain degree to symmetries of the graph and of the corresponding partition, we will be studying graph exhibiting  symmetries. Inspired by the aggregation of the quantum walk on hexahedron, in this Section we briefly present analogous results for other four Platonic solids: tetrahedron $T$, octahedron $O$, icosahedron $I$, and dodecahedron $D$, for completeness~\cite{Coxeter}. In each case we consider homogeneous probabilities, and the partition consisting with the sets of vertices with a fixed distance from one of them. It turns out that in all such cases the lumped Markov chain represents random walk on a path graph. Notice however, that for other partitions into different coarse-grained graphs are also possible, see the next Section for an example.

We do not write down transition matrices of the initial homogeneous random walks and the basis of the reduced space, as they can be readily obtained from the presented graphs by Corollary~\ref{cor:hom}. In each case we present transition matrices of lumped chains and action of the evolution operators in basis of the reduced space, the Verblunsky coefficients and the corresponding CMV basis.

\subsection{Tetrahedron}
The tetrahedron graph, visualized on Figure~\ref{fig:tetrahedron}, is the complete graph with four vertices enumerated $V(T) = \{ 0,1,2,3\}$. Consider the partition consisting of the subsets $A=\{ 0\}$, $B=\{ 1,2,3\}$.
	\begin{figure}[h!]
		\begin{center}
			\includegraphics[width=4.5cm]{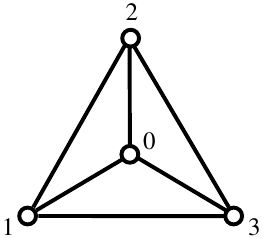}
		\end{center}
		\caption{The tetrahedron graph}
		\label{fig:tetrahedron}
	\end{figure}
The transition matrix of the homogeneous random walks on the graph indexed according to the above order reads
	\begin{equation}
		P_T=\frac{1}{3} \left(\begin{array}{c:ccc} 
			0 & 1 & 1 & 1 \\ \hdashline
			1 & 0 & 1 & 1\\ 
			1 & 1 & 0  & 1\\ 
			1 & 1 & 1 & 0 
		\end{array}\right).
	\end{equation}
The lumping procedure leads to the Markov chain with the transition matrix 
	\begin{equation} \label{eq:P-lT}
		\tilde{P}_T =\frac{1}{3} \left(\begin{array}{cc} 
			0 & 3 \\ 
			1 & 2
		\end{array}\right).
	\end{equation}

The vectors which span the reduced space are given as follows
	\begin{gather*}
		\ket{{A,B}} = \frac{1}{\sqrt{3}} \ket{0} \otimes \left(\ket{1}+\ket{2} + \ket{3}\right), 	\qquad
		\ket{{B,A}} = \frac{1}{\sqrt{3}}  \left(\ket{1}+\ket{2} + \ket{3}\right) \otimes \ket{0} , \\
		\ket{{B,B}} = \frac{1}{\sqrt{6}} \left(\ket{1} \otimes \ket{2} + \ket{1} \otimes \ket{3} + \ket{2} \otimes \ket{1} + \ket{2} \otimes \ket{3} +  \ket{3} \otimes \ket{1} + \ket{3} \otimes \ket{2}\right).
	\end{gather*}
The action of the evolution operator $U$ of the full space restricts to that vectors and reads
	\begin{gather*}
		U(\ket{{A,B}}) = \ket{{B,A}}, \qquad	U(\ket{{B,A}} )= -\frac{1}{3}\ket{{A,B}}+\frac{2\sqrt{2}}{3}\ket{{B,B}}, \qquad U \ket{{B,B}} = \frac{2\sqrt{2}}{3}\ket{{A,B}}+\frac{1}{3}\ket{{B,B}},
	\end{gather*}
in agreement with the quantization of the lumped chain. 

The Verblunsky coefficients and the CMV basis of the quantization of the random walk with the initial state $	e_0 = \ket{\phi_0} = \ket{A,B}$
read
	\begin{equation}
		\alpha_0 = 0, \quad e_1=\ket{B,A}, \qquad 
		\alpha_1 = -\frac{1}{3}, \quad e_2=\ket{B,B}, \qquad \alpha_2=1.
	\end{equation}
\begin{Rem}
The same Verblunsky coefficients are obtained for quantization of the lumped walk \eqref{eq:P-lT} and the cyclic initial vector 
$ \ket{A}\otimes\ket{B}$.
\end{Rem}

\subsection{Octahedron}
The six vertices of the octahedron graph, visualized on Figure~\ref{fig:octahedron}, can be conveniently described as $V(O) = \{\pm1, \pm2, \pm3\} $. 
	\begin{figure}[h!]
	\begin{center}
		\includegraphics[width=4.5cm]{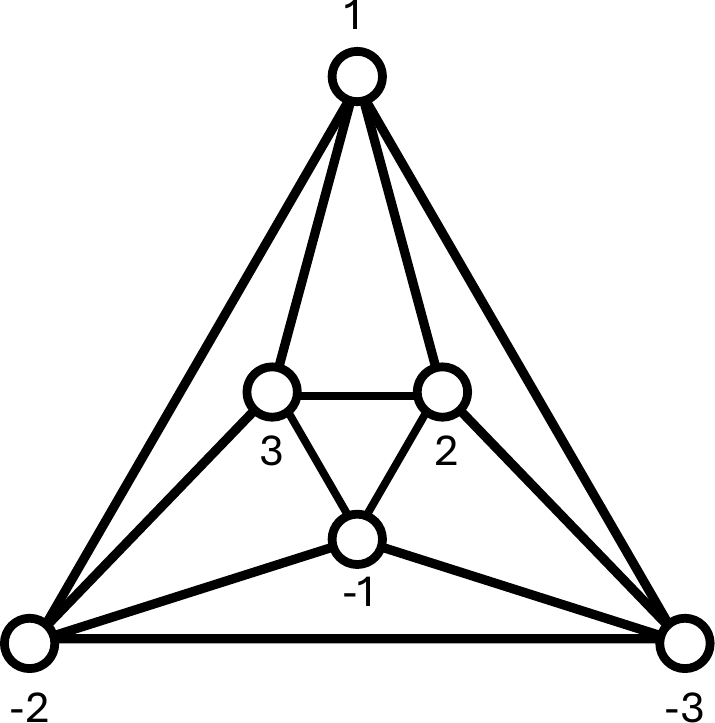}
	\end{center}
	\caption{The octahedron graph}
	\label{fig:octahedron}
\end{figure}
The strong lumpability with respect to the partition $A=\{-1\}$, $B=\{ \pm2, \pm3\}$, $C=\{1\}$, gives the corresponding lumped Markov chain with the	
transition matrix 
\begin{equation}
	\tilde{P}_O =\frac{1}{4} \left(\begin{array}{ccc} 
		0 & 4 & 0 \\ 
		1 & 2 & 1 \\
		0 & 4 & 0
	\end{array}\right).
\end{equation}

The action of the evolution operator $U$ on the reduced space basis vectors is given by
	\begin{gather*}
		U(\ket{{A,B}}) = \ket{{B,A}}, \qquad	U(\ket{{B,A}}) = -\frac{1}{2} \ket{{A,B}} + \frac{1}{\sqrt{2}} \ket{{B,B}} + \frac{1}{2} \ket{{C,B}}, \\
		U(\ket{{B,B}}) = \frac{1}{\sqrt{2}}\ket{{A,B}} + \frac{1}{\sqrt{2}}\ket{{C,B}}, \;	U(\ket{{B,C}}) = \frac{1}{2} \ket{{A,B}} + \frac{1}{\sqrt{2}} \ket{{B,B}} - \frac{1}{2} \ket{{C,B}}, \;		U (\ket{{C,B}}) = \ket{{B,C}}. 
	\end{gather*}

The  Verblunsky coefficients and the corresponding CMV orthonormal basis of the reduced space for the quantized chain with the initial state $e_0 =	 \ket{\phi_{-1}} = \ket{A,B}$ are given by
\begin{equation*}
\alpha_0 = 0, \quad 	\alpha_1 = -\frac{1}{2}, \quad \alpha_2 = \frac{2}{3},
\quad 	\alpha_3 = \frac{1}{5},  \quad \alpha_4=1,
\end{equation*}
and
\begin{gather*}
	  e_1=\ket{B,A}, \quad
	 e_2=\frac{1}{\sqrt{3}}\left(\sqrt{2}\ket{B,B} + \ket{B,C}\right), \quad	
	 e_3=\frac{1}{\sqrt{15}}\left(\sqrt{2}\ket{B,B} - 2\ket{B,C} + 3 \ket{C,B}\right), \\  e_4=\frac{1}{\sqrt{5}}\left(-\ket{B,B} + \sqrt{2}\ket{B,C} + \sqrt{2} \ket{C,B}\right).
	\end{gather*}

\subsection{Icosahedron}
The vertex set of the icosahedron graph $I$, visualized on Figure~\ref{fig:icosahedron}, consists of ordered pairs of different elements of the set $\{1,2,3,4\}$. Two distinct vertices $(i,j)$ and $(k,l)$ are connected by an edge  if $i=k$ or $j=l$,  or $i,j,k,l$ are pairwise distinct and the sequence $(i,j,k,l)$ is an even permutation of $(1,2,3,4)$. 
\begin{figure}[h!]
	\begin{center}
		\includegraphics[width=6.5cm]{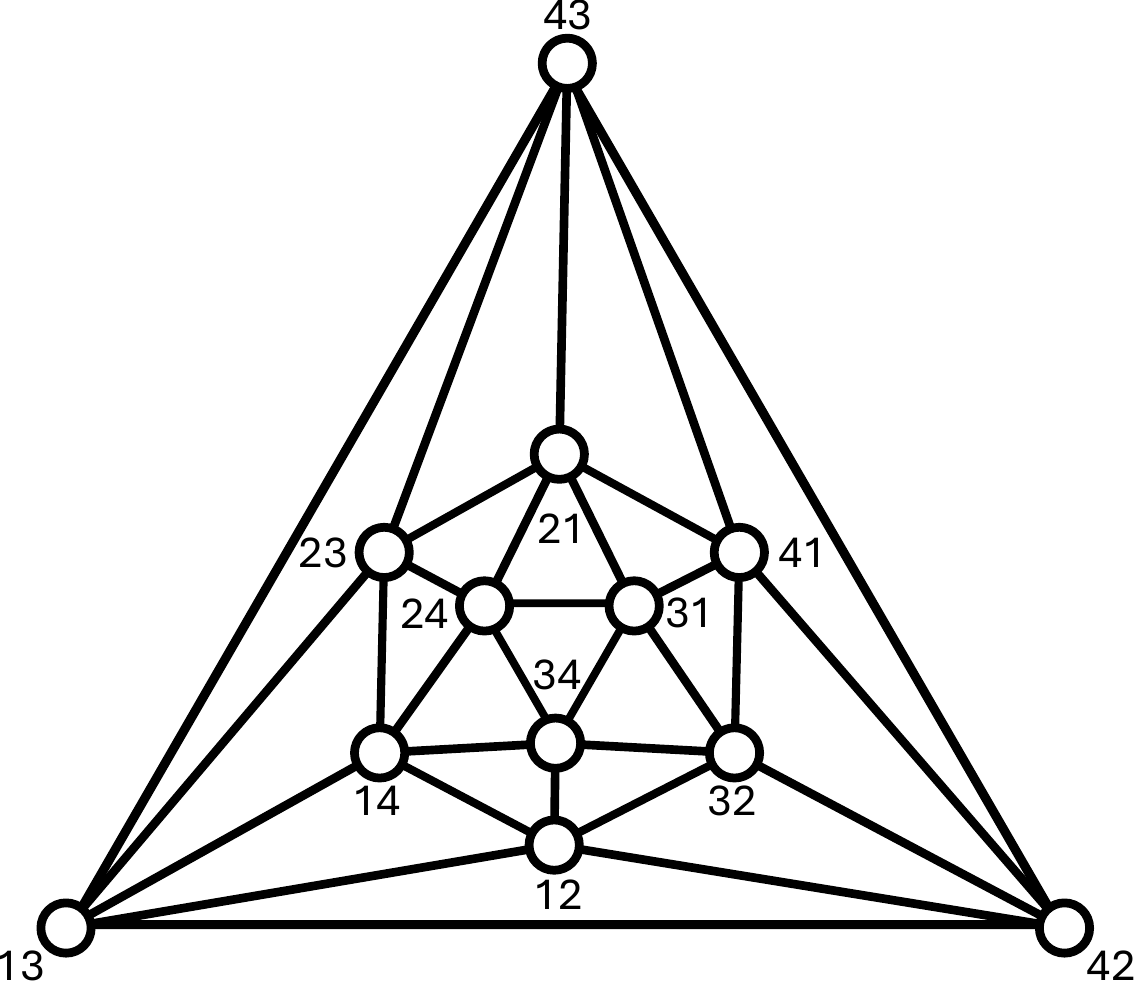}
	\end{center}
	\caption{The icosahedron graph}
	\label{fig:icosahedron}
\end{figure}	

Let us consider partition of the vertices into sets consisting of points with the same distance from the vertex $12$ 
\begin{gather*}
	A = \{ 12\}, \qquad 
	B = \{ 13, 14, 32, 34, 42 \}, \qquad
	C = \{ 31, 41, 23, 43, 24 \}, \qquad		 
	D = \{ 21 \}, 		 
\end{gather*}
then the transition matrix of the corresponding lumped Markov chain reads
\begin{equation}
	\tilde{P}_I =\frac{1}{5} \left(\begin{array}{cccc} 
		0 & 5 & 0 & 0 \\
		1 & 2 & 2 & 0 \\
		0 & 2 & 2 & 1 \\ 
		0 & 0 & 5 & 0  
	\end{array}\right).
\end{equation}
The action of the evolution operator $U$ in the reduced space is given by
\begin{gather*}
	U(\ket{{A,B}}) = \ket{{B,A}}, \qquad 
	U(\ket{{B,A}}) = -\frac{3}{5}\ket{{A,B}}+\frac{2\sqrt{2}}{5}\ket{{B,B}}+\frac{2\sqrt{2}}{5}\ket{{C,B}}, \\
	U(\ket{{B,B}}) = \frac{2\sqrt{2}}{5}\ket{{A,B}}-\frac{1}{5}\ket{{B,B}}+\frac{4}{5}\ket{{C,B}}, \qquad
	U(\ket{{B,C}}) =\frac{2\sqrt{2}}{5}\ket{{A,B}}+\frac{4}{5}\ket{{B,B}} - \frac{1}{5}\ket{{C,B}}, \\
	U(\ket{{C,C}}) = \frac{4}{5}\ket{{B,C}}-\frac{1}{5}\ket{{C,C}}+\frac{2\sqrt{2}}{5}\ket{{D,C}}, \qquad
	U(\ket{{C,B}}) = -\frac{1}{5}\ket{{B,C}}+\frac{4}{5}\ket{{C,C}}+\frac{2\sqrt{2}}{5}\ket{{D,C}}, \\
	U(\ket{{C,D}}) \frac{2\sqrt{2}}{5}\ket{{B,C}}+\frac{2\sqrt{2}}{5}\ket{{C,C}}-\frac{3}{5}\ket{{D,C}},
	\qquad
	U(\ket{{D,C}}) = \ket{{C,D}}.
\end{gather*}
The Verblunsky coefficients and the CMV basis for the initial state
$
	e_0 = \ket{\phi_{12}} = \ket{A,B}
$
read
\begin{equation*}
	\alpha_0 = 0, \quad \alpha_1 = -\frac{3}{5}, \quad \alpha_2 = \frac{1}{2}, \quad 
	\alpha_3 = -\frac{7}{15}, \quad \alpha_4 = \frac{8}{11}, \quad 	\alpha_5 = \frac{3}{19}, \quad \alpha_{6}=1,
\end{equation*}
\begin{gather*}
	e_1=\ket{B,A}, \quad
	e_2= \frac{1}{\sqrt{2}}\left(\ket{B,B} + \ket{B,C}\right),
	\quad
	e_3= \frac{1}{\sqrt{6}}\left(\ket{B,B} - \ket{B,C} + 2\ket{C,B}\right),
	\\
	e_4=\frac{\sqrt{33}}{11}\left(-\frac{\sqrt{2}}{3}\ket{B,B} + 
	\frac{\sqrt{2}}{3}\ket{B,C}+\frac{\sqrt{2}}{3}\ket{C,B} + 
	\sqrt{2}\ket{C,C}+  \ket{C,D}\right),
	\\
	e_5=\frac{\sqrt{209}}{209}\left(-\sqrt{2}\ket{B,B}+\sqrt{2}\ket{B,C}+
	\sqrt{2}\ket{C,B}+3\sqrt{2}\ket{C,C} -8\ket{C,D}+11\ket{D,C}\right),
	\\
	e_6=\frac{\sqrt{19}}{38}\left(\ket{B,B} - \ket{B,C} - \ket{C,B} - 3\ket{C,C}+4\sqrt{2}\ket{C,D}+ 4\sqrt{2}\ket{D,C}\right).
\end{gather*}

\subsection{Dodecahedron}
Vertices of the dodecahedron graph, visualized on Figure~\ref{fig:dodecahedron}, can be conveniently described as ordered pairs of different elements of the set $\{1,2,3,4,5\}$. Two vertices $(i,j)$ and $(k,l)$ are connected by an edge  if $i,j,k,l$ are pairwise distinct, and the sequence $(i,j,k,l,m)$ with $m$ being the unique complementary number is an even permutation of $(1,2,3,4,5)$. 
	\begin{figure}[h!]
		\begin{center}
			\includegraphics[width=5.5cm]{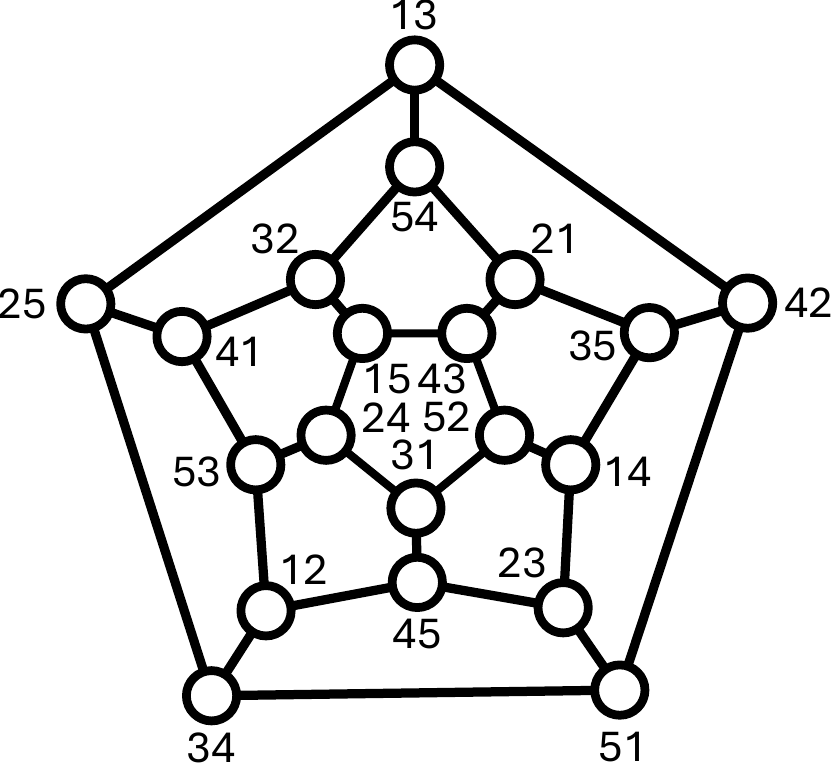}
		\end{center}
		\caption{The dodecahedron graph}
		\label{fig:dodecahedron}
	\end{figure}	

Consider partition of the vertices into sets consisting of points with the same distance from  $31$ 
\begin{gather*}
	A = \{ 31\}, \qquad 
	B = \{ 24, 52, 45 \}, \qquad
	C = \{ 53, 15, 43, 14, 12, 23 \}, \qquad
	D = \{ 41, 32, 21, 35, 34, 51 \}, \\
    E = \{ 25, 54, 42 \}, \qquad
	F = \{ 13 \}.
\end{gather*}
The transition matrix of the corresponding lumped Markov chain is given by
\begin{equation}
		\tilde{P}_D =\frac{1}{3} \left(\begin{array}{cccccc} 
			0 & 3 & 0 & 0 & 0 & 0 \\ 
			1 & 0 & 2 & 0 & 0 & 0 \\ 
			0 & 1 & 1 & 1 & 0 & 0 \\
			0 & 0 & 1 & 1 & 1 & 0 \\ 
			0 & 0 & 0 & 2 & 0 & 1 \\ 
			0 & 0 & 0 & 0 & 3 & 0 
		\end{array}\right).
	\end{equation}
The action of the evolution operator $U$ in the reduced space reads
	\begin{gather*}
		U(\ket{{A,B}}) = \ket{{B,A}}, \quad	U(\ket{{B,A}}) = -\frac{1}{3} \ket{{A,B}} + \frac{2\sqrt{2}}{3} \ket{{C,B}}, \quad	
		U(\ket{{B,C}}) = \frac{2\sqrt{2}}{3} \ket{{A,B}} + \frac{1}{3} \ket{{C,B}}, \\
		U(\ket{{C,C}}) = \frac{2}{3} \ket{{B,C}} - \frac{1}{3} \ket{{C,C}} + \frac{2}{3} \ket{{D,C}}, \qquad 
		U(\ket{{C,B}}) = -\frac{1}{3} \ket{{B,C}} + \frac{2}{3} \ket{{C,C}} + \frac{2}{3} \ket{{D,C}}, \\
		U(\ket{{C,D}})	= \frac{2}{3} \ket{{B,C}} + \frac{2}{3} \ket{{C,C}} - \frac{1}{3} \ket{{D,C}}, \qquad 
		U(\ket{{D,D}})	= \frac{2}{3} 	\ket{{C,D}} - \frac{1}{3} \ket{{D,D}} + \frac{2}{3} \ket{{E,D}}, \\
		U(\ket{{D,C}})	= -\frac{1}{3} \ket{{C,D}} + \frac{2}{3} \ket{{D,D}} + \frac{2}{3} \ket{{E,D}}, \qquad 
		U(\ket{{D,E}})	= \frac{2}{3}	\ket{{C,D}} + \frac{2}{3} \ket{{D,D}} - \frac{1}{3} \ket{{E,D}}, \\
		U(\ket{{E,D}})	= \frac{1}{3} \ket{{D,E}} + \frac{2\sqrt{2}}{3} \ket{{F,E}}, \quad  U( \ket{{E,F}})	= \frac{2\sqrt{2}}{3} \ket{{D,E}} - \frac{1}{3} \ket{{F,E}}, \quad U(\ket{{F,E}})	= \ket{{E,F}}.
	\end{gather*}

The Verblunsky coefficients for the initial state $e_0 = \ket{\phi_{31}} = \ket{A,B}$ read
\begin{gather*}
\alpha_0 = 0, \quad		\alpha_1 = -\frac{1}{3}, \quad \alpha_2 = 0, \quad 
\alpha_3 = -\frac{1}{3}, \quad \alpha_4 = \frac{1}{2}, \quad \alpha_5 = -\frac{5}{9}, \\
\alpha_6 = \frac{4}{7}, \quad \alpha_7 = -\frac{5}{33}, \quad \alpha_8 = \frac{8}{19}, \quad
	\alpha_9 = \frac{11}{27}, \quad \alpha_{10}=1,	
\end{gather*}	
and the vectors of the corresponding CMV basis in the reduced space are given by
		\begin{gather*}	
	e_0  = \ket{A,B} 	, \quad  e_1=\ket{B,A},
	\quad 	 e_2= \ket{B,C}, \quad  e_3= \ket{C,B}, \quad
	e_4=\frac{1}{\sqrt{2}}\left(\ket{C,C} + \ket{C,D}\right), \\
	e_5=\frac{1}{\sqrt{6}}\left(\ket{C,C} - \ket{C,D} +2\ket{D,C}\right), \quad
	e_6=\frac{1}{\sqrt{21}}\left(-\ket{C,C} + \ket{C,D} + \ket{D,C} + 3\ket{D,D} + 3 \ket{D,E}\right), \\
	e_7=\frac{1}{\sqrt{77}}\left(-\ket{C,C} + \ket{C,D} + \ket{D,C}+3\ket{D,D} -4\ket{D,E}+  7\ket{E,D}\right), \\
	e_8=\frac{1}{\sqrt{209}} \left( \sqrt{2} \left( \ket{C,C} - \ket{C,D} - \ket{D,C} -3\ket{D,D} +   4\ket{D,E}+4\ket{E,D}\right) +11\ket{E,F} \right), \\
	e_9=\frac{\sqrt{57}}{171}\left(\sqrt{2} \left( \ket{C,C} - \ket{C,D} - \ket{D,C} -3\ket{D,D}+4\ket{D,E} +   4\ket{E,D} \right) -8\ket{E,F}+19\ket{F,E} \right),\\
	e_{10}=\frac{\sqrt{3}}{18} \left( -\ket{C,C} + \ket{C,D} +\ket{D,C}+3\ket{D,D}-4\ket{D,E}-   4\ket{E,D}+ 4\sqrt{2}\ket{E,F}+4\sqrt{2}\ket{F,E} \right) .
\end{gather*}

\begin{Rem} In this Section we studied quantum walks corresponding to lumped random walks on Platonic graphs with respect to certain well defined partitions. The coarse-grained graphs turned out to be paths, and the same reduced quantum walk was obtained by application of CMV uniformization algorithm for appropriate initial vector. Indeed, using equations \eqref{eq:rk}-\eqref{eq:pk} one can check that the Verblunsky coefficients, calculated in the reduction process, give the lumped transition matrices. 
	\end{Rem}

\section{Another lumping of the random walk on hexahedron graph}
In this Section we present an observation related to another aggregation of random walk on the hexahedron graph related to a different partition of its vertex set
	\begin{equation*}
	K = \{ 001, 010\}, \qquad L = \{ 000, 011, 110, 101\}, \qquad M = \{ 100, 111\}. 
\end{equation*}
As one can easily check that the lumpability condition \eqref{eq:lumpability}, in the case of homogeneous random walk on the hexahedron graph, is not satisfied. However it is not difficult (we leave it to the interested Reader as an exercise) to modify appropriately the transition probabilities to have the lumpability for that partition.  

Another option, which we consider, is to refine the above partition, keeping homogeneity of the random walk on the hexahedron, to
	\begin{equation*}
	A = \{ 001, 010\}, \qquad B = \{ 000, 011\}, \qquad C = \{ 110, 101\}, \qquad D = \{ 100, 111\}. 
\end{equation*}
Then the lumpability condition~\eqref{eq:lumpability} is satisfied and the corresponding lumped chain, visualized on Figure~\ref{fig:Ehrenfests-AB},
	\begin{figure}[h!]
		\begin{center}
			\includegraphics[width=4.5cm]{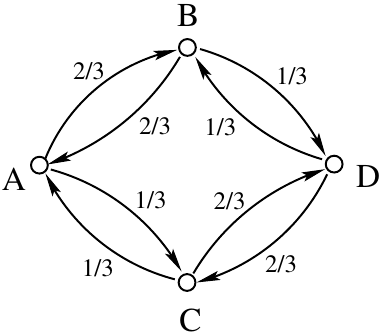}
		\end{center}
		\caption{Lumped homogeneous random walk on hexahedron graph for another partition}
		\label{fig:Ehrenfests-AB}
	\end{figure}
has the following transition matrix 
	\begin{equation} \label{eq:tP-hex}
		\tilde{P} =  \frac{1}{3}\begin{pmatrix}
			0 & 2 & 1 & 0 \\
			2 & 0 &  0 & 1\\
			1 & 0 & 0 & 2 \\
			0 & 1 & 2 & 0
		\end{pmatrix}.
		\end{equation}
We leave for the interested Reader calculation of the corresponding vectors which span the reduced space,  action of the quantum evolution operator, and comparison with the corresponding results for quantized lumped chain. 	

Let us present the Verblunsky coefficients and vectors of the CMV basis obtained for quantization of the lumped chain with the initial state
\begin{equation}
	e_0 = \ket{\phi_A} = \frac{1}{\sqrt{3}} \left( \sqrt{2} \ket{A} \otimes \ket{B} +  \ket{A} \otimes \ket{C} \right) . 
\end{equation}
A direct calculation gives 
\begin{equation}
\alpha_0 = 0, \quad \alpha_1 = \frac{1}{9}, \quad \alpha_2 = 0, \quad 	\alpha_3 = \frac{3}{5}, \quad \alpha_4 = 0, \quad \alpha_5 = 1, 
\end{equation}
and
\begin{align*}
	 &  \quad e_1 = \frac{1}{\sqrt{3}} \left( \sqrt{2} \ket{B} \otimes \ket{A} +  \ket{C} \otimes \ket{A} \right), \\
	&  \quad e_2 = \frac{1}{2\sqrt{15}} \left( \sqrt{2} \ket{B} \otimes \ket{A} + 6 \ket{B}\otimes\ket{D} - 2 \ket{C} \otimes \ket{A} + 3\sqrt{2} \ket{C} \otimes \ket{D} \right) , \\
	&  \quad e_3 = \frac{1}{2\sqrt{15}} \left( \sqrt{2} \ket{A} \otimes \ket{B} + 6 \ket{D}\otimes\ket{B} - 2 \ket{A} \otimes \ket{C} + 3\sqrt{2} \ket{D} \otimes \ket{C} \right) , \\
 &  \quad e_4 = \frac{1}{\sqrt{15}} \left( - \sqrt{2} \ket{A} \otimes \ket{B} + 2 \ket{A}\otimes\ket{C} -  \ket{2} \otimes \ket{B} + 2\sqrt{2} \ket{D} \otimes \ket{C} \right) ,\\
	&  \quad e_5 = \frac{1}{\sqrt{15}} \left( - \sqrt{2} \ket{B} \otimes \ket{A} + 2 \ket{C}\otimes\ket{A} -  \ket{B} \otimes \ket{D} + 2\sqrt{2} \ket{C} \otimes \ket{D} \right) .
\end{align*}

Notice that, according to equations~\eqref{eq:rk}-\eqref{eq:pk}, the above Verblunsky coefficients can be also obtained by quantization of the random walk, see Figure~\ref{fig:Ehrenfests-AB-s}, on the path $\{ \bar{0}, \bar{1}, \bar{2}, \bar{3} \}$ with the following transition probabilities 
\begin{equation} \label{eq:P-path}
	\bar{P} = 	\begin{pmatrix}
		0 & 1 & 0 & 0\\
		\frac{5}{9} & 0 &  \frac{4}{9} & 0 \\
		0 & \frac{4}{5} & 0 & \frac{1}{5} \\
		0 & 0 & 1 & 0
	\end{pmatrix}.
\end{equation}
\begin{figure}[h!]
	\begin{center}
		\includegraphics[width=6.5cm]{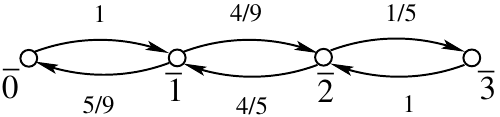}
	\end{center}
	\caption{The reduced Markov chain}
	\label{fig:Ehrenfests-AB-s}
\end{figure}
The connection between two above classical Markov chains has been obtained on the quantum level by expressing the same unitary operator in two different orthonormal bases. The transition on the classical level can be done by the following formal identification
\begin{equation}
	\bar{0} = A, \qquad \bar{1} = \frac{2}{3} B  + \frac{1}{3} C, \qquad \bar{2} = D, \qquad \bar{3} = - B + 2C.  
\end{equation}
Notice that we used the classical state $ \bar{3}$ with \emph{negative coefficient} at vertex $B$, and the coefficient at vertex $C$ \emph{greater then} $1$, i.e. the state  $\bar{3}$ is not legitimate \emph{probability state} and can be called a \emph{quasi-probability state}. This adds to the recent revival of the old discussion on "negative probability" \cite{Dirac,Bartlett,Feynman} in connection to application of the quasi-probability techniques in machine learning, mathematical finance or quantum information theory~\cite{Blass-Gurevich,TemmeBravyiGambetta,PiveteauSutterWoerner}. We remark that the transition matrix $\bar{P}$ in \eqref{eq:P-path} can be obtained from the matrix $\tilde{P}$  in \eqref{eq:tP-hex} by a quasi-stochastic version of the Lanczos algorithm~\cite{Lanczos}. We simplified the random process \eqref{eq:tP-hex} by putting it on a simpler graph paying the price of using quasi-probability states. For more detailed discussion of such phenomena and other examples we refer to future publication~\cite{DoliwaSiemaszkoZalewski-quasi}.

\section{Quantum walk on the hypercube graph, and the quantum Ehrenfests model} \label{sec:q-QN-E}
In this Section we present a generalization of our results on quantum walk on hexahedron and its reduction to arbitrary dimension. 
Consider the $N$-dimensional hypercube graph $Q_N$ with the vertex set $V(Q_N)=\ZZ_2^N$ consisting of binary sequences of length $N$, the edge set $E(Q_N)$ consists of (ordered) pairs of vertices/sequences which differ in one element only. 
\begin{Rem}
	It may be useful to consider the hypercube $Q_N$ as the Cayley graph of the Abelian group $(\ZZ_2^N, \oplus)$ with the set of generators $\{ e_j \}_{j=1}^N$ being binary sequences with one at $j$-th place and zeros at all other places.
\end{Rem}

Denote by $A_{k}$, $k=0,1,\dots ,N$, the set of binary sequences (of length $N$) with exactly $k$ ones. The homogeneous random walk on the hypercube is strongly lumpable with respect to such partition. 
 The corresponding lumped chain has the structure of the random walk on the path $\{ 0, 1, \dots , N\}$ with the following transition probabilities 
\begin{equation} \label{eq:transitions-E}
	q_k = \frac{k}{N}, \qquad r_k = 0, \qquad p_k = 1 - \frac{k}{N}, \qquad k=0,1, \dots N.
\end{equation}
The above aggregation result is well known~\cite{Diaconis}, and the lumped chain has an interpretation as the Ehrenfests~\cite{Ehrenfest} urn model. Recall that in the Ehrenfests model one considers two urns and $N$ balls distributed in the urns. The system is said to be in state $k$ if there are $k$ balls in the first urn, $N-k$ balls in the second urn. At a discrete-time step a ball is chosen uniformly at random and moved to the other urn. 

On the quantum level, the vectors which span the reduced space within the Hilbert space of the quantum hypercube walk, are given by
\begin{gather*}
	\ket{k,k+1} = \sqrt{\frac{k! (N-k-1)!}{N!}} \sum_{\substack{u\in A_k, v\in A_{k+1}\\ (u,v)\in E(Q_N)}} \ket{u}\otimes \ket{v} , \qquad k = 0,1, \dots ,N-1, \\
		\ket{k,k-1} = \sqrt{\frac{(k-1)! (N-k)!}{N!}} \sum_{\substack{u\in A_k, v\in A_{k-1} \\ (u,v)\in E(Q_N)}} \ket{u}\otimes \ket{v} , \qquad k = 1, 2,\dots , N.
\end{gather*}
Under action of the quantum evolution operator on the hypercube they change according to
\begin{gather*}
U(\ket{k,k+1}) = \frac{2\sqrt{k(N-k)}}{N} \ket{k-1,k} + \frac{N-2k}{N} \ket{k+1,k}, \\
U(\ket{k,k-1}) = \frac{2k-N}{N} \ket{k-1,k} +  \frac{2\sqrt{k(N-k)}}{N} \ket{k+1, k} .
\end{gather*}
This is the same behavior like in the Szegedy quantization of the Ehrenfests urn model with the natural basis 
\begin{equation*}
	\ket{0}\otimes \ket{1}, \qquad \ket{k}\otimes\ket{k\pm1},  \qquad \ket{N}\otimes \ket{N-1}, \qquad k=1,\dots ,N-1,
\end{equation*}
and the reflection in the hyperspace spanned by the vectors
\begin{equation*}
	\ket{\phi_k} = \frac{1}{\sqrt{N}}\left( \sqrt{k} \ket{k}\otimes\ket{k-1} + \sqrt{N-k} \ket{k} \otimes \ket{k+1}\right), \qquad k=0,1,\dots ,N.
\end{equation*}
\begin{Rem}
The spectral properties of the Ehrenfests model are well known, and they follow from relation of the model~\cite{KarlinMcGregor} to the Kravtchouk polynomials~\cite{Nikiforov-Suslov-Uvarov}. In particular, the stationary state of the model is provided by the binomial distribution, and the eigenvalues of the transition matrix are uniformly distributed on the segment $[-1,1]$. Discrete-time quantum walk on a hypercube in connection to the Ehrenfests model and Kravtchouk polynomials is discussed in \cite{HoKonno}, and its continuous time version is studied in \cite{Ekert,Ekert2} in relation to perfect state transfer. 
\end{Rem}
Notice that, unlikely the natural basis of states in the quantization of the Ehrenfests model, the corresponding states of the quantum walk on the hypercube $Q_N$ are not (except of $\ket{0,1}$ and $\ket{N,N-1}$) product states. This can be checked by calculating the von Neumann entropy of the reduced states. For example, direct calculation, by tracing  with respect of the second factor of the state $|k,k+1\rangle \bra{k,k+1}$,  gives the matrix of order $\binom{N}{k}$ 
	\begin{equation*}
\text{tr}_C \rho_{|k,k+1\rangle}  = \!\!\!
\sum_{w\in A_{k+1}} \! \! \! \mathrm{Id}_1 \otimes \braket{w|k,k+1} \braket{k,k+1|\mathrm{Id}_1 \otimes|w}
= \frac{k! (N-k-1)!}{N!}\begin{pmatrix}
	N-k & 1 & \cdots & 1 \\
	1 & N-k & \ddots & \vdots \\
	\vdots & \ddots & \ddots &1\\
	1 & \cdots & 1 & N-k
\end{pmatrix}.
\end{equation*}	
Its eigenvalues are
\begin{align*}
	\frac{N-k-1}{\binom{N}{k} (N-k)} & \qquad \text{with multiplicity} \quad \binom{N}{k} -1 ,\\
	\frac{\binom{N}{k} + N-k-1}{\binom{N}{k} (N-k)} & \qquad \text{with multiplicity} \quad 1 ,
\end{align*}
what implies that the corresponding von Neumann entropy does not vanish.

Finally, let us present (the proof is by induction, but in fact it is a special case of the general theory given in~\cite{Doliwa-Siemaszko-QW}) the Verblunsky coefficients and the corresponding CMV basis of the quantum Ehrenfests model, for the initial vector $e_0 = \ket{0} \otimes \ket{1}$ 
\begin{align*}
	\alpha_{2k} = & 0,  & e_{2k} & = \ket{k}\otimes \ket{k+1}, \qquad k=0,1,\dots , N-1,\\
	\alpha_{2k-1} = & \frac{2k-N}{N} ,  & e_{2k-1} & = \ket{k}\otimes \ket{k-1}, \qquad k=1,2, \dots N.
\end{align*}
By equations \eqref{eq:rk}-\eqref{eq:pk} one recovers the transition probabilities \eqref{eq:transitions-E}. The same result, modulo the correspondence $\ket{k}\otimes \ket{k\pm 1} \leftrightarrow \ket{k,k\pm 1}$, is obtained by the CMV algorithm for the unitary evolution operator on the hypercube and the initial vector $e_0 = \ket {\phi_{00\dots 0}}$.

\section{Aggregated quantum walks on a free group}
In this final Section we extend our approach to graphs with infinite set of vertices, but with very regular structure. On example of the Cayley graph of the free group with two generators we describe strong lumping which leads to quantum walks on half line with constant transition probabilities. As any group can be obtained by imposing certain relations on generators of a free group then such graphs are potentially important in studying quantum walks on graphs of other groups.

Consider free group over two generators $S = \{ a,b \}$ and its Cayley graph $CF_2$. Recall that vertices $V(CF_2)$ of the graph are (reduced) words $w$ over alphabet  with letters $a^{\pm1}$, $b^{\pm 1}$ of $S\cup S^{-1}$ including the empty word $e$ being the identity element. Two vertices belong to the edge set $(u,v)\in E(CF_2)$ when $u^{-1}v \in S\cup S^{-1}$. The transition matrix of the corresponding random walk is proportional to the adjacency matrix $A(CF_2)$ with homogeneous probabilities equal to $1/4$. We consider partition of $V(CF_2)$ into subsets $A_k$ containing vertices of the same distance $k$ from the identity $e$, i.e. $A_0 = \{ e\}$, $A_1 = \{ a, a^{-1}, b, b^{-1}\}$, and
$
	A_2 = \{ aa, ab, ab^{-1}, a^{-1} a^{-1}, a^{-1} b, a^{-1} b^{-1}, bb, ba, ba^{-1}, b^{-1}b^{-1}, b^{-1} a, b^{-1} a^{-1} \}$;
we have $|A_k| = 4 \cdot 3^{k-1}$ words of length $k\geq 1$. 
 
The corresponding reduced walk on $\NN_0$ has transition probabilities 
\begin{equation}
	p_0 = 1, \qquad p_k = \frac{3}{4}, \qquad q_k = \frac{1}{4}, \qquad k \geq 1.
\end{equation}
Szegedy's quantization of such random walk is a special case of quantum walks with two-periodic Verblunsky coefficients studied in~\cite{Doliwa-Siemaszko-QW}. In particular, its Verblunsky coefficients and the CMV basis read
\begin{equation*}
	\alpha_{2k} = 0, \qquad \alpha_{2k+1} = -\frac{1}{2}, \qquad 
	e_{2k} = \ket{k}\otimes \ket{k+1}, \qquad e_{2k+1} = \ket{k+1}\otimes \ket{k}, \qquad
	k = 0,1,2, \dots \; .
\end{equation*}
\begin{Rem}
The above calculations easily generalize to an arbitrary free group with finite number of generators. \end{Rem}

One can study other groups by imposing certain relations on the generators. For example, the hexahedron graph is a Cayley graph of the group obtained by imposing on the set of three generators $\{a,b,c\}$ the involutivity $a^2 = b^2 = c^2= e$ and commutativity $ab = ba$, $ac = ca$, $bc = cb$ relations. Relaxing the involutivity condition we obtain the Cayley graph having the structure of three dimensional integer lattice $\ZZ^3$.

Consider the second case, where we assume involutivity of the generators only. Then the vertices of the corresponding Cayley graph can be identified with the words over the alphabet $\{a,b,c\}$ with no occurrence of the same two consecutive letters. Splitting again the vertices into subsets $A_k$, $k=0,1,2,\dots $, having the same distance from the identity element $e$, i.e., $A_0=\{ e\}$, $A_1=\{ a,b,c\}$, $A_2 = \{ ab, ac, ba, bc, ca, cb \}$, we obtain $|A_k| = 3\cdot 2^{k-1}$, $k\geq 1$. 
The corresponding reduced walk on $\NN_0$ has transition probabilities 
\begin{equation}
	p_0 = 1, \qquad p_k = \frac{2}{3}, \qquad q_k = \frac{1}{3}, \qquad k \geq 1.
\end{equation}
Its Verblunsky coefficients and the CMV basis read
\begin{equation*}
	\alpha_{2k} = 0, \qquad \alpha_{2k+1} = -\frac{1}{3}, \qquad 
	e_{2k} = \ket{k}\otimes \ket{k+1}, \qquad e_{2k+1} = \ket{k+1}\otimes \ket{k}, \qquad
	k = 0,1,2, \dots \; .
\end{equation*}
\begin{Rem}
	As it was shown in \cite{Doliwa-Siemaszko-QW}, random walks on $\NN_0$ with two-periodic Verblunsky coefficients are described by Chebyshev orthogonal polynomials. Our examples of this Section provide special cases of such a relation. 
\end{Rem}

\section{Conclusion and open problems}
We presented aggregation procedure of Szegedy's quantum walk on a graph allowing for a coarse-graining reduction.  Given the possibility of two operations on a Markov chain -- aggregation and quantization -- a natural question arises regarding their permutability. The possibility of such an exchange can be of great practical significance in developing new algorithms or investigating the properties of graphs. In order for both operations to commute, the initial Markov chain should satisfy certain conditions in relation to the underlying partition of states, which includes the strong lumpability property.  This condition must be supplemented by nonlinear relations, generalizing Kolmogorov's cycle condition, that involve the coefficients of both the original and the aggregated transition matrices. 

 Such conditions are fulfilled, for example, in the case of homogeneous random walks on distance-regular graphs, which  form a class of regular graphs with strong combinatorial symmetry, and their partition into spherical sets. We gave several examples including aggregation of quantum walks on Platonic solids, hypercube graphs or Cayley graphs of free groups. We have only just touched upon the quantization of walks on graphs equipped with equitable partitions. Due to the importance of this concept in algebraic graph theory and its relation with other branches of mathematics and computer science this connection certainly deserves deeper studies.

The CMV uniformization of unitary operators naturally leads to operators which can be interpreted as quantum walks on path graphs and the corresponding decomposition of the Hilbert space. Because the coarse-grained graph may not have the path topology this leads to interesting observations regarding the "straightening" procedure of Markov chains and the quasi-probability and quasi-stochasticity, which will be reported in~\cite{DoliwaSiemaszkoZalewski-quasi}.

 Another aspects worth of study, mentioned in the body of the paper, are reductions of walks on Cayley graphs implied by relation on the generators of free groups, information-theoretic aspect of aggregation of quantum walks or the connection to the theory of integrable systems. 
 	Because many computational problems can be expressed in terms of graph exploration, quantum walks continue to be one of the most active areas of quantum algorithm research. A graph with a large automorphism group often allows the Hilbert space to be decomposed into invariant subspaces. These questions lie at the intersection of quantum algorithms, algebraic graph theory, spectral graph theory, and representation theory. They are mathematically rich, have direct connections to open problems such as graph isomorphism, and are well suited to both theoretical analysis and numerical experimentation.
 
  One can view quantum walks as the single-particle sector of a quantum many-body system, with spin chains providing one of the simplest realizations. The challenge is to understand how the elegant graph-theoretic picture changes when moving from one excitation to many interacting excitations. The present research can be generalized to compare interacting and non-interacting walks on graph families equipped with various symmetry structures. It is anticipated that results from the theory of integrable quantum systems~\cite{QISM} may prove useful in these studies.
 
 Finally we remark that in our article we have restricted attention to aggregation of quantum walks obtained by Szegedy quantization of discrete-time Markov chains. Other quantization schemes can be investigated from the coarse-graining point of view as well. Moreover, aggregation of quantum walks which do not have direct classical counterparts, and which can be used to solve various computational problems is of considerable interest.

 \section*{Acknowledgements}
 We would like to thank the anonymous Reviewers, whose comments and suggestions helped improve the presentation of the results.

\end{document}